\documentclass[journal, twocolumn]{IEEEtran}
\ifCLASSINFOpdf
\else
  \usepackage[dvips]{graphicx}
\fi

\usepackage{url}
\usepackage{graphicx}
\usepackage{caption}
\usepackage{subcaption}
\usepackage[font=footnotesize]{caption}
\usepackage{kotex}  
\usepackage{cite}
\usepackage{amsmath,amssymb,amsfonts,amsthm}
\usepackage{bbm}
\usepackage{algorithm}
\usepackage{algpseudocode}
\usepackage{mathtools}
\usepackage{braket}
\usepackage{bm}
\usepackage{multirow}
\usepackage{booktabs}
\usepackage{xcolor}
\usepackage{flushend}

\newtheorem{lemma}{Lemma}

\begin{document}

\bstctlcite{BSTcontrol}

\title{Robust Beamforming Design for Coherent Distributed ISAC with Statistical RCS and Phase Synchronization Uncertainty}

\author{Seonghoon Yoo, \IEEEmembership{Graduate Student Member, IEEE}, Seulhyun Kwon, Kawon Han, \IEEEmembership{Member, IEEE},\\ Elaheh Ataeebojd, \IEEEmembership{Member, IEEE}, Mehdi Rasti, \IEEEmembership{Senior Member, IEEE} and Joonhyuk Kang, \IEEEmembership{Member, IEEE}%
\thanks{This work was supported by Institute of Information \& communications Technology Planning \& Evaluation (IITP) grant funded by the Korea government (MSIT) (No. RS-2024-00444230, Development of Wireless Technology for Integrated Sensing and Communication). \textit{(Corresponding author: Joonhyuk Kang)}}
\thanks{Seonghoon Yoo, Seulhyun Kwon and Joonhyuk Kang are with the School of Electrical Engineering, Korea Advanced Institute of Science and Technology, Daejeon 34141, South Korea (e-mail: shyoo902@kaist.ac.kr, seulhyun@kaist.ac.kr, jhkang@ee.kaist.ac.kr).}%
\thanks{Kawon Han is with the Department of Electrical Engineering, Ulsan National Institute of Science and Technology, Ulsan, South Korea (email:
kawon.han@unist.ac.kr).}
\thanks{E. Ataeebojd and M. Rasti are with the Centre for Wireless Communications, University of Oulu, Oulu, Finland (email: \{elaheh.ataeebojd, mehdi.rasti\}@oulu.fi).}
}

\markboth{IEEE XXX,~Vol.~XX, No.~XX, XXX~2026}%
{Yoo \MakeLowercase{\textit{et al.}}: Robust Beamforming Design for Coherent Distributed ISAC with Statistical RCS and Phase Synchronization Uncertainty}

\maketitle

\begin{abstract}
Distributed integrated sensing and communication (D-ISAC) enables multiple spatially distributed nodes to cooperatively perform sensing and communication. However, achieving coherent cooperation across distributed nodes is challenging due to practical impairments. In particular, residual phase synchronization errors result in imperfect channel state information (CSI), while angle-of-arrival (AoA) uncertainties induce radar cross-section (RCS) variations. These impairments jointly degrade target detection performance in D-ISAC systems. To address these challenges jointly, this paper proposes a robust beamforming design for coherent D-ISAC systems. Multiple distributed nodes coordinated by a central unit (CU) jointly perform joint transmission coordinated multipoint (JT-CoMP) communication and multi-input multi-output (MIMO) radar sensing to detect a target while serving multiple user equipments (UEs). We formulate a robust beamforming problem that maximizes the expected Kullback–Leibler divergence (KLD) under statistical RCS variations while satisfying system power and per-user minimum signal-to-interference-plus-noise ratio (SINR) constraints under imperfect CSI to ensure the communication quality of service (QoS). The problem is solved using semidefinite relaxation (SDR) and successive convex approximation (SCA), and numerical results show that the proposed method achieves up to 3 dB signal-to-clutter-plus-noise ratio (SCNR) gain over the conventional beamforming schemes for target detection while maintaining the required communication QoS.
\end{abstract}

\begin{IEEEkeywords}
Distributed integrated sensing and communication (D-ISAC), phase synchronization errors, radar cross-section (RCS), Kullback–Leibler divergence (KLD), joint transmission coordinated multipoint (JT-CoMP), multi-input multi-output (MIMO) radar.
\end{IEEEkeywords}

\section{Introduction}
\IEEEPARstart{I}{ntegrated} sensing and communication (ISAC) has recently attracted significant attention as a key technology for next-generation wireless systems, where radar sensing and communication functionalities are performed using shared radio resources \cite{Liu2022ISAC, Zhang2022ISAC, Gonzales2024ISAC, Wymeersch2026ISAC, Zhang2021Matched_filter}. Through a unified transmit design, an ISAC system can simultaneously support wireless communication for user equipments (UEs) while sensing surrounding targets, improving both spectral efficiency and hardware utilization compared with separately deployed systems. However, since sensing and communication rely on the same transmit resources, their operations are inherently coupled, leading to a fundamental trade-off between sensing performance and communication quality-of-service (QoS) \cite{Han2025ISAC_tradeoff, Liu2018ISAC_tradeoff, yoo2025RIS_ISAC}. Therefore, designing ISAC transmission strategies that achieve reliable target sensing while maintaining the required communication performance for UEs remains a central challenge, particularly when both tasks are handled by a single network node.

In most existing ISAC systems, communication and sensing are typically performed by a single base station (BS). Although such systems enable the joint operation of the two functions, their performance is inherently limited by the resources and spatial degrees of freedom available at a single node. In particular, sensing performance may degrade due to limited spatial diversity and coverage, while communication performance is constrained by the transmit power and antenna resources when serving multiple UEs \cite{Liu2018ISAC_tradeoff}. Such limitations become more critical in mission-critical scenarios, where reliable target detection must be achieved while simultaneously supporting communication services for multiple users in a wide-area environment \cite{Li2025mission, Nikbakht2026mission}. To overcome these limitations, recent research has begun to investigate network-level distributed ISAC (D-ISAC), where multiple spatially separated nodes cooperate to perform sensing and communication \cite{Babu2024Uncertainty_CSI, han2025distributed_ISAC, Cheng2024NetworkedISAC, han2025over, Han2025phase_synchronization, Yang2024Networked_ISAC,   Xiu2026sync_imperfectAoA, Rivetti2025statistical_RCS, Zhang2025CooperativeISAC, Lou2025D_ISAC, Ji2025N_ISAC_testbed}. Through cooperation among distributed nodes, D-ISAC can exploit additional spatial diversity and cooperative transmission gains, thereby improving both sensing reliability and communication performance.
\begin{table*}[t]
    \centering
    \caption{Comparison with Representative D-ISAC Studies}
    \label{related works}
    \begin{tabular}{lcccccl}
        \toprule
        \textbf{References}  & \textbf{D-ISAC} & 
        \textbf{Objective} & 
        \textbf{Synchronization error} &
        \textbf{Imperfect AoA} & \textbf{Statistical RCS} &  \\
        \midrule
        
        
        \cite{Babu2024Uncertainty_CSI}     & \checkmark          & Minimize CRB while maximizing SINR          & --          & --  & --               \\ 
        \cite{han2025distributed_ISAC}     & \checkmark          & Minimize CRB           &   --        & --  & --               \\
        \cite{Cheng2024NetworkedISAC}     & \checkmark         & Maximize detection probability            & Time          & --  & --               \\

        \cite{han2025over}     & \checkmark          &   Analyze synchronization error impact        & Time-frequency          & --  & --               \\
        \cite{Han2025phase_synchronization}     & \checkmark          & Analyze synchronization error impact          & Phase          & --  & --               \\
          \cite{Yang2024Networked_ISAC}     & \checkmark          & Maximize sum-rate         & Time          & --  & --               \\
         \cite{Xiu2026sync_imperfectAoA}     & \checkmark          & Minimize power consumption          & Time          & \checkmark  & --               \\
         \cite{Rivetti2025statistical_RCS}     & \checkmark         & Maximize sensing SCNR          & --          & --  & \checkmark               \\
\midrule
This paper                                                    & \checkmark          & Maximize KLD   & Phase         &  \checkmark          & \checkmark    \\

        \bottomrule
    \end{tabular}
\end{table*}

In such distributed architectures, sensing can be implemented through monostatic or bistatic links among multiple network nodes \cite{Babu2024Uncertainty_CSI}. By exploiting the cooperative sensing capability of distributed links, D-ISAC can further enable multi-static sensing, which improves sensing reliability and coverage while supporting communication services for multiple UEs over a wide-area network. The authors in \cite{han2025distributed_ISAC} investigated distributed ISAC signal design for joint target localization and multi-user communication, while \cite{Cheng2024NetworkedISAC} studied coordinated transmit beamforming in networked ISAC systems to jointly optimize sensing and communication performance. In addition, \cite{Lou2025D_ISAC} investigated beamforming optimization in a D-ISAC system with integrated active and passive sensing, where sensing information collected by multiple nodes is fused at a central controller to improve target detection performance. In many of these systems, coherent cooperation among distributed nodes is typically assumed in order to fully exploit the spatial diversity and signal combining gains offered by multi-node sensing and transmission.

However, achieving coherent cooperation in practical D-ISAC systems is challenging due to various system impairments among distributed nodes. For example, phase synchronization across distributed transmitters is typically performed using pilot signals, where phase offsets are estimated and compensated at the receiver side \cite{nasir2016timing, han2025over}. Nevertheless, residual phase errors inevitably remain after the correction process and can degrade coherent signal combining across distributed nodes \cite{Han2025phase_synchronization, Dario2024perfect_synchronization}. Such impairments can be interpreted as channel state information (CSI) uncertainty. The authors in \cite{Han2025phase_synchronization} analyzed the impact of synchronization errors and phase noise on the sensing performance of D-ISAC systems, while \cite{Yang2024Networked_ISAC}  investigated coordinated beamforming in networked ISAC systems under imperfect CSI and time synchronization. However, the solution to mitigate the impact of phase synchronization errors in D-ISAC has been unexplored in coherent cooperation.

Another important factor affecting sensing performance in D-ISAC systems is the radar-cross section (RCS) of the target. In practical scenarios, the RCS of many realistic targets, such as unmanned aerial vehicles (UAVs), can vary significantly depending on the observation angle due to their structural and material characteristics \cite{Ezuma2022UAV_RCS}. In addition, stealth platforms may employ radar-absorbing materials (RAMs) to suppress radar reflections and reduce their detectability \cite{Rivetti2025statistical_RCS, Zheng2025Anti_detection, Zheng2024EWAM}. As a result, the received sensing signal strength can fluctuate significantly across different observation angles. Moreover, due to complex scattering effects, the target RCS is often modeled as a random variable following a statistical distribution rather than a deterministic value \cite{Ezuma2022UAV_RCS, khalili2025Angle_RCS}. This issue becomes more pronounced when the angle-of-arrival (AoA) of the target is imperfectly estimated. AoA estimation errors can cause mismatches between the assumed and actual RCS characteristics, which further degrade target detection performance in D-ISAC systems. A summary of the key characteristics considered in existing D-ISAC studies is provided in Table~\ref{related works}.

Building on these observations, this paper proposes a robust beamforming design for coherent D-ISAC systems, where multiple distributed nodes are connected to a central unit (CU) and cooperatively perform joint transmission coordinated multipoint (JT-CoMP) communication and distributed multi-input multi-output (MIMO) radar sensing. In the considered scenario, the CU coherently coordinates distributed transceivers to detect a single target while simultaneously serving multiple UEs. The system accounts for practical impairments in both sensing and communication, including phase synchronization mismatches among distributed nodes as well as statistical RCS variations associated with imperfect AoA information. Based on this model, we develop a robust D-ISAC beamforming framework that captures the sensing–communication trade-off under practical uncertainties.
The main contributions of this paper are summarized as follows:

\begin{itemize}

\item We consider a coherent D-ISAC system in which multiple distributed transmitters coordinated by a CU cooperatively detect a target with statistical RCS while simultaneously serving multiple UEs. The considered model captures practical system impairments, including phase synchronization errors among distributed nodes, CSI uncertainty, and statistical RCS variations associated with imperfect AoA information.

\item We formulate a robust beamforming design problem for single-target detection and multi-UE communication by maximizing the Kullback--Leibler divergence (KLD) for sensing while guaranteeing minimum signal-to-interference-plus-noise ratio (SINR) for all UEs. To the best of our knowledge, this is the first work that adopts the KLD-based sensing metric incorporating system impairments in D-ISAC systems for robust target detection. The formulation explicitly characterizes the sensing--communication trade-off under practical uncertainties in D-ISAC systems.

\item To solve the resulting non-convex problem, we develop an efficient robust beamforming algorithm based on semidefinite relaxation (SDR) and successive convex approximation (SCA). The proposed approach jointly optimizes the communication and sensing beamforming vectors across distributed nodes while satisfying SINR requirements and system power constraints.
\end{itemize}

Numerical results demonstrate that the proposed robust D-ISAC beamforming significantly improves target detection performance for the target with statistical RCS compared with the conventional benchmark scheme while maintaining the required communication QoS for multiple UEs. The results verify the convergence behavior of the proposed algorithm and further reveal the impact of phase synchronization mismatch, statistical RCS variations, and AoA uncertainty on the sensing–communication trade-off.

\textit{Notation:} $\mathbb{E}[\cdot]$ denotes the expectation of the argument matrix; we reserve the superscript $A^T$ and $A^H$ for the transpose and the conjugate transpose of $A$, respectively; $\mathbf{0}_{N\times M}$ and $\mathbf{I}_N$ denote a $N\times M$ zero-matrix and a $N\times N$ identity matrix, respectively; $\mathbbm{1}\{\cdot\}$ denotes the indicator function, which equals one if the specified condition is satisfied and zero otherwise; and $\mathrm{tr}(\cdot)$, $\mathrm{diag}(\cdot)$, $\mathrm{blkdiag}(\cdot)$ and $\mathrm{rank}(\cdot)$ represent the trace, diagonal, block diagonal and rank function, respectively.


\section{System Model}
\label{system model}

\begin{figure}[t]
\centering
   \includegraphics[width=1\linewidth]{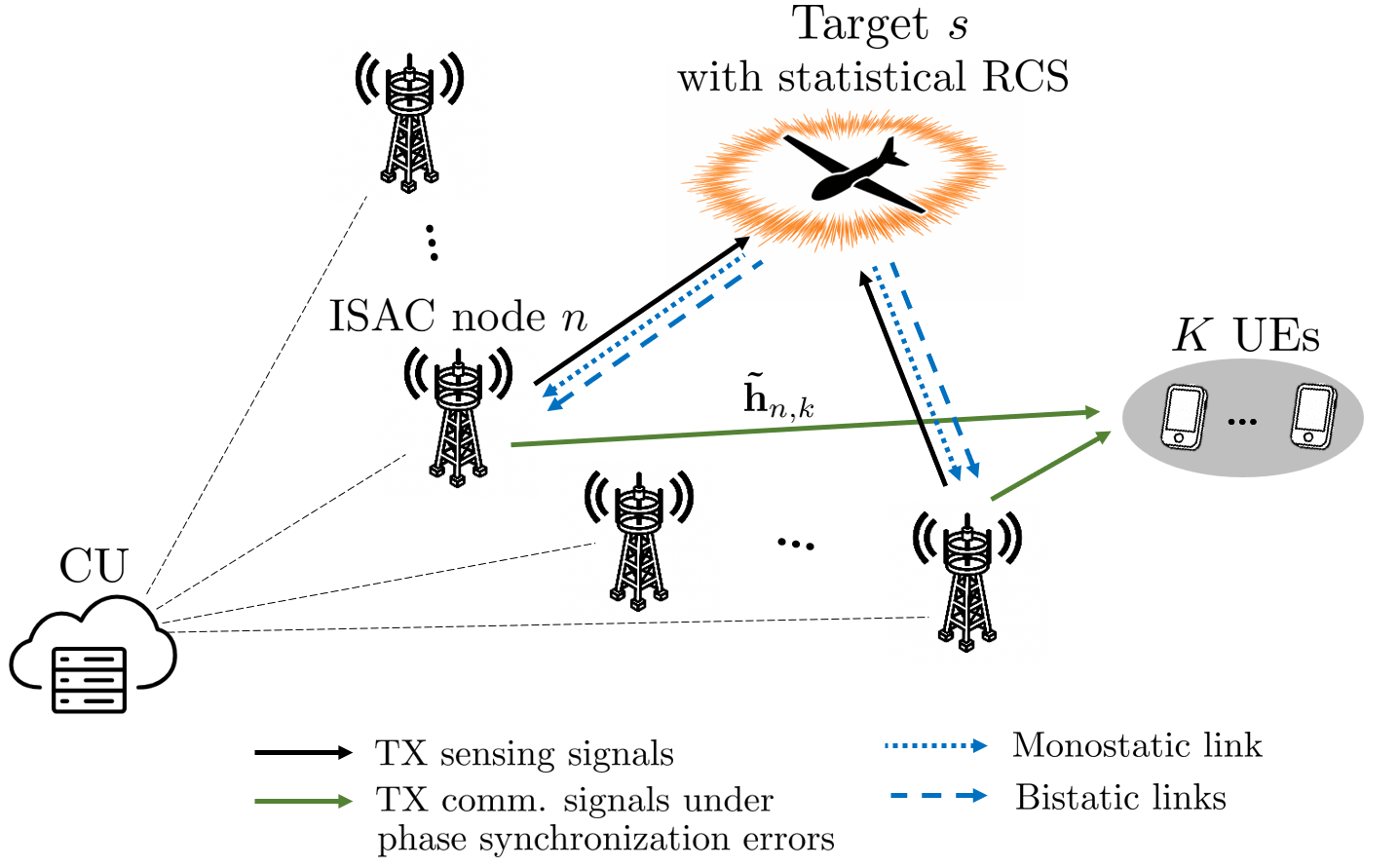}
   \caption{System model of the considered robust coherent D-ISAC network. Multiple spatially distributed ISAC nodes collaboratively transmit superposed communication and sensing waveforms and forward the received echoes to a CU for centralized processing. Statistical RCS fluctuations and phase-synchronization errors are explicitly considered in the robust beamforming design.}
   \label{system_model}
\end{figure}

\subsection{Coherent D-ISAC System}

We consider a coherent distributed MIMO ISAC system, where $N$ spatially distributed ISAC nodes collaboratively perform joint sensing and communication, as illustrated in Fig.~\ref{system_model}. 
The distributed nodes operate as a coherent distributed MIMO radar system that exploits multi-static sensing links, while supporting coherent JT-CoMP communication. Each node transmits an ISAC waveform that superposes communication and sensing components, while simultaneously receiving the reflected target echoes in a full-duplex manner\footnote{Since waveform information is shared among the nodes, direct-path self-interference caused by signal leakage from transmit to receive chains can be mitigated via analog and digital domain cancellation \cite{Baquero2019self_interference_cancellation}. 
The residual self-interference components are incorporated into the receiver noise term following standard full-duplex modeling \cite{Sabharwal2014Full_duplex}.}. The received baseband signals at all ISAC nodes are forwarded to a CU equipped with a central processing unit (CPU) via fronthaul links. At the CU, the sensing information collected from all distributed nodes is coherently combined for centralized target detection and parameter estimation.

The $n$-th ISAC node is located at $\mathbf{p}_n=[x_n,y_n,0]^T$, for $n \in \{1,\ldots,N\}\triangleq\mathcal{N}$, where all nodes are deployed on the ground plane. 
In addition to sensing functionality, the distributed ISAC network simultaneously supports downlink communications for $K$ UEs located at $\mathbf{p}_k=[x_k,y_k,0]^T$, for $k \in \{1,\ldots,K\}\triangleq \mathcal{K}$, each equipped with a single antenna. Each node is equipped with $M_t$ transmit antennas and $M_r$ receive antennas, both configured as uniform linear arrays (ULAs) with antenna spacing $d_0$.

We aim to detect a target with specific RCS characteristics that reduce radar detectability, such as a stealth aircraft \cite{Zheng2024EWAM} or a large-scale UAV equipped with stealth features of which statistical RCS is only available.
Specifically, the target is assumed to be coated with RAMs to suppress radar reflections and reduce the RCS \cite{Vinoy1996RAM}, \cite{Ullah2024RCS_Reduction}. The location of the sensing target is denoted by $\mathbf{p}_s = [x_s, y_s, z]^T$, where $z$ represents a known altitude. The distance between the target and the $n$-th ISAC node is defined as  $R=\lVert\mathbf{p}_s-\mathbf{p}_n\rVert$ for all $n\in\mathcal{N}$. For analytical clarity, we focus on a single-target detection scenario, which enables us to isolate the impact of the proposed robust ISAC beamforming framework without additional coupling effects introduced by multiple targets \cite{Behdad2022CFISAC, Liu2020single_target}. Furthermore, such a framework can be readily extended to a multi-target scenario \cite{behdad2024multi}. For the considered target with statistical RCS, variations along the elevation dimension are relatively small over the sensing interval, and thus the target motion is mainly characterized on the horizontal plane \cite{Khalili2024fixed_altitude, Zheng2024EWAM}.

The transmitted signal of the $n$-th ISAC node at snapshot $t \in \{1,\ldots,T\}$ is expressed as
\begin{equation}
    \mathbf{x}_n[t]= \mathbf{W}_{c,n}\mathbf{s}_{c}[t]+ \mathbf{w}_{s,n}s_{s,n}[t],
\end{equation}
where $\mathbf{W}_{c,n} \in \mathbb{C}^{M_t \times K}$ and $\mathbf{w}_{s,n} \in \mathbb{C}^{M_t \times 1}$ denote the communication precoder and sensing beamformer, respectively. 
The signals $\mathbf{s}_{c}[t]=[s_{c,1}(t),\ldots,s_{c,K}(t)]^T \in \mathbb{C}^{K \times 1}$ and $s_{s,n}[t]$ represent the multi-user communication symbols for JT-CoMP transmission and the sensing waveform transmitted by the $n$-th ISAC node, respectively. 
Without loss of generality, we assume the uncorrelated conditions for the communication and sensing signals, satisfying $\mathbb{E}\big[\mathbf{s}_c[t](\mathbf{s}_c[t])^H\big]=\mathbf{I}_K$, $\mathbb{E}\big[\mathbf{s}_{c}[t]{s_{s,n}^*[t]}\big]=\mathbf{0}_{K \times 1}$, $\mathbb{E}\big[{s_{s,n}[t]s^*_{s,n}[t]}\big]=1$ and $\mathbb{E}\big[{s_{s,n}[t]s^*_{s,m}[t]}\big]=0$ for all $n\neq m$. 
For simplicity, we omit the symbol index $t$ in the following discussions.

\subsection{Communication System Model}
\label{Communication System Model}
The downlink channel $\mathbf{h}_{n,k}\in \mathbb{C}^{M_t\times 1}$ from the $n$-th ISAC node to the $k$-th UE is assumed to follow a Rician fading model \cite{Yoo2024Rician}, which is expressed as
\begin{equation}
\begin{aligned}
&\mathbf{h}_{n,k}
=\\& \sqrt{\ell_{n,k}}
\Bigg(
\sqrt{\frac{\gamma}{\gamma+1}}
e^{-j\frac{2\pi}{\lambda}\|\mathbf{p}_n-\mathbf{p}_k\|}
\mathbf{a}(\theta_{n,k})+
\sqrt{\frac{1}{\gamma+1}}
\tilde{\mathbf{a}}
\Bigg).
\end{aligned}
\end{equation}
where $\theta_{n,k}$ is the AoA between ISAC node $n$ and UE $k$, $\ell_{n,k}$ denotes the large-scale path-loss factor between the $n$-th ISAC node and UE $k$, and the term $e^{-j{2\pi\lVert \mathbf{p}_n-\mathbf{p}_k \rVert}/{\lambda}}$ represents the deterministic phase shift induced by the propagation. 
Here, $\gamma$ is the Rician factor and $\lambda$ denotes the signal wavelength. 
The line-of-sight (LoS) component is modeled as the array steering vector $\mathbf{a}(\theta)=[1, e^{j{2\pi}d_0 \sin(\theta)/{\lambda} },\ldots,e^{j{2\pi} d_0(M_t-1) \sin(\theta)/{\lambda}}]^T \in \mathbb{C}^{M_t \times 1}$, and $d_0$ is the transmit antenna spacing. 
The non-line-of-sight (NLoS) component $\tilde{\mathbf{a}}$ is modeled as a circularly symmetric complex Gaussian (CSCG) random variable with unit variance.

In practical distributed deployments, perfect phase synchronization among spatially separated ISAC nodes is difficult to achieve. 
In particular, independent local oscillators at different nodes introduce residual carrier frequency and phase offsets, resulting in imperfect alignment of reference time and frequency \cite{han2025distributed_ISAC}. 
Such synchronization impairments cause degradation in coherent transmission performance and may lead to errors in range and velocity estimation in sensing tasks. To explicitly account for phase synchronization imperfections, the residual phase error at the $n$-th node is denoted by $\phi_n$. 
The effective downlink channel from the $n$-th ISAC node to the $k$-th UE is modeled as
\begin{equation}
    \label{syncronization_error_chanel}
    \tilde{\mathbf{h}}_{n,k}=e^{j\phi_n}\mathbf{h}_{n,k} 
    = \mathbf{h}_{n,k}+\mathbf{e}_{n,k},
\end{equation}
where $\mathbf{e}_{n,k}$ represents the channel uncertainty induced by phase synchronization errors. 
We assume that the channel uncertainty is bounded within a spherical set
\begin{equation}
\label{sync_error_bound}
\mathcal{E}_{n,k}=\{\mathbf{e}_{n,k}: \lVert \mathbf{e}_{n,k} \rVert^2\leq \delta^2\},
\end{equation}
where $\delta$ denotes the uncertainty bound on the residual phase synchronization errors. The uncertainty set in \eqref{sync_error_bound} follows the worst-case CSI uncertainty model in \cite{Babu2024Uncertainty_CSI}, \cite{Masouros2015Interference}.

Under the above model, the received baseband signal at UE $k$ is expressed as
\begin{equation}
\begin{aligned}
   y_k 
   &= \sum_{n=1}^{N}\mathbf{\tilde{h}}_{n,k}^H\mathbf{x}_n + n_k \\
   &= \sum_{n=1}^{N}\big(
   \underbrace{\mathbf{\tilde{h}}_{n,k}^H\mathbf{w}_{c,n,k}s_{c,k}}_{\text{desired signal}}
   + 
   \underbrace{\sum_{j \in \mathcal{K}\backslash \{k\}}
   \mathbf{\tilde{h}}_{n,k}^H\mathbf{w}_{c,n,j}s_{c,j}}_{\text{multi-user interference}}
   \big)
   \\&+
   \underbrace{\sum_{n=1}^{N}
   \mathbf{\tilde{h}}_{n,k}^H\mathbf{w}_{s,n}s_{s,n}}_{\text{sensing interference}}
   + n_k,
\end{aligned}
\end{equation}
where $\mathbf{w}_{c,n,k} \in \mathbb{C}^{M_t\times 1}$ denotes the $k$-th column of $\mathbf{W}_{c,n}$, i.e., $\mathbf{W}_{c,n}=[\mathbf{w}_{c,n,1},\ldots,\mathbf{w}_{c,n,K}]$, and $n_k \sim \mathcal{CN}(0,\sigma_c^2)$ represents additive white Gaussian noise (AWGN), where $\sigma_c^2$ denotes the noise variance.

By exploiting coherent combining across distributed ISAC nodes, the downlink SINR at UE $k$ is defined as
\begin{equation}
\text{SINR}_{k}=
\frac{\left|\sum\limits_{n=1}^N 
\mathbf{\tilde{h}}^H_{n,k}\mathbf{w}_{c,n,k}\right|^2}
{\sum\limits_{j \in \mathcal{K}\backslash \{k\}}
\left|\sum\limits_{n=1}^N 
\mathbf{\tilde{h}}^H_{n,k}\mathbf{w}_{c,n,j}\right|^2
+
\sum\limits_{n=1}^N
\left|\mathbf{\tilde{h}}_{n,k}^H\mathbf{w}_{s,n}\right|^2
+
\sigma_c^2}.
\end{equation}
In the above expression, the numerator represents the coherently combined desired communication signal power at UE $k$, while the denominator consists of the aggregated multi-user interference, sensing-induced interference, and the UE's receiver noise.

\subsection{Sensing System Model}
\label{Sensing System Model}

\subsubsection{Imperfect Target AoA and Statistical RCS Model}

\begin{figure}[t]
\centering
\begin{subfigure}{0.485\columnwidth}
    \centering
    \includegraphics[width=\linewidth]{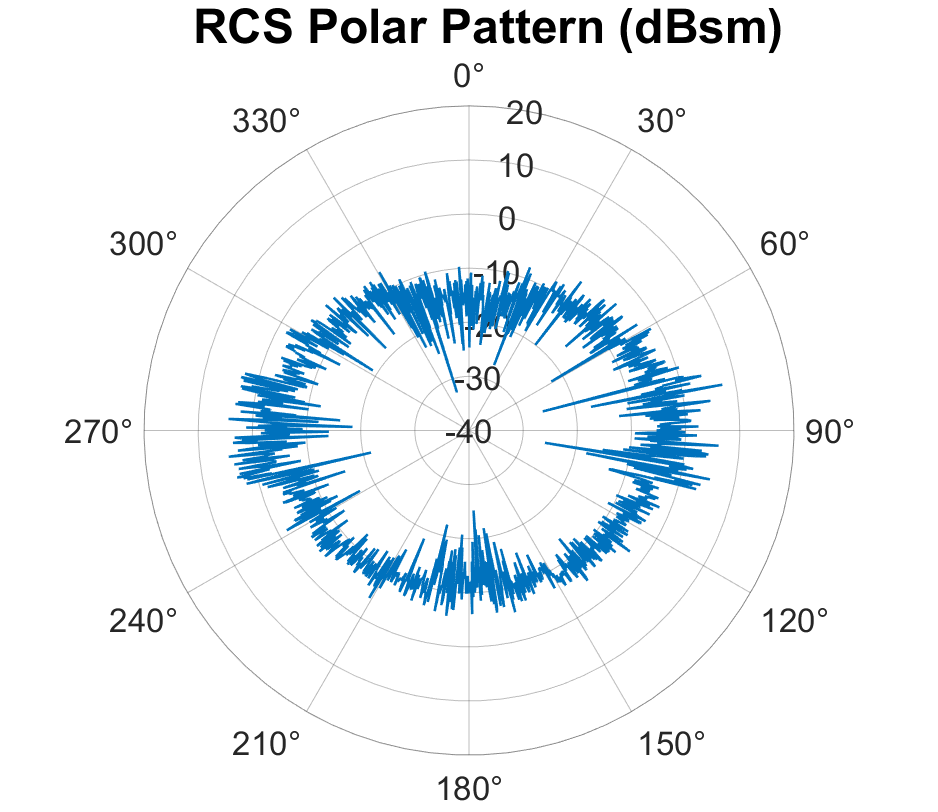}
    \caption{Chi-square model.}
\end{subfigure}
\begin{subfigure}{0.485\columnwidth}
    \centering
    \includegraphics[width=\linewidth]{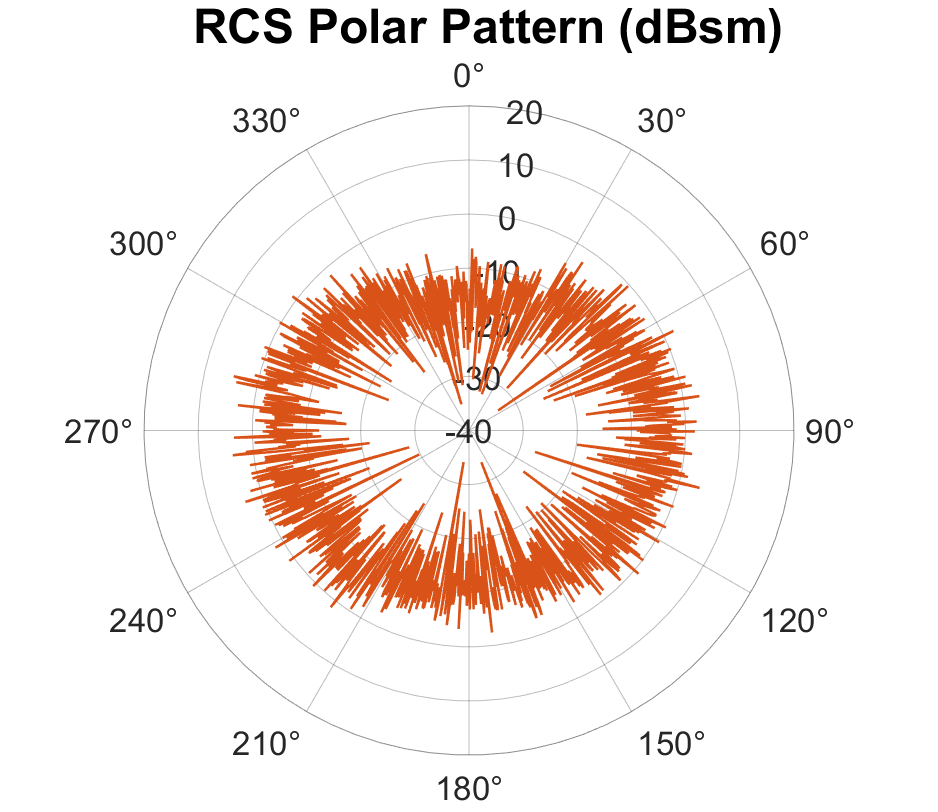}
    \caption{Swerling I model.}
\end{subfigure}

    \caption{Illustrative polar RCS patterns under different statistical RCS models for a target located at the center of the network: (a) Chi-square model and (b) Swerling I model.}
    \label{RCS_distribution}
\end{figure}

We consider a target with the statistical RCS which is intentionally designed to reduce radar detectability. 
Due to non-uniform electromagnetic absorption and geometrical surface features, the reflected energy observed by different ISAC nodes varies with the spatial observation angle \cite{khalili2025Angle_RCS}. 
As a result, the RCS exhibits angular dependence with statistical variations determined by the target scattering characteristics.

To consider practical sensing scenarios, the AoA information available at the transmitter side is assumed to have uncertainty. Let $\theta_{s,n}$ denote the AoA between the $n$-th ISAC node and the sensing target, with $\Theta_{s,n}$ representing the corresponding AoA uncertainty set. The AoA is assumed to lie within a bounded uncertainty region \cite{Lyu2024angle_uncertainty}
\begin{equation}
\label{aoa_bound}
\theta_{s,n} \in 
[\hat{\theta}_{s,n}-\Delta\theta_{s,n}, 
 \hat{\theta}_{s,n}+\Delta\theta_{s,n}]
\triangleq \Theta_{s,n},
\end{equation}
where $\hat{\theta}_{s,n}$ is the estimated AoA and $\Delta\theta_{s,n}$ denotes the maximum estimation error.

Under such angular uncertainty, it is necessary to model the statistical properties of the target RCS. 
In general, the RCS variation stems from two principal factors:  
(i) deterministic changes caused by aspect-angle variation, and  
(ii) probabilistic fluctuations induced by surface scattering mechanisms at a fixed aspect angle \cite{mahafza2005radar, Shi2012ChiLognormal}. 
For the maneuvering target, small variations in the aspect angle can result in significant RCS fluctuations due to the sensitivity of electromagnetic scattering to the observation geometry \cite{bilik2010maneuver}.
Since the exact aspect angle cannot be perfectly determined a priori, it is therefore reasonable to characterize the RCS as a random variable following an appropriate statistical distribution. Representative statistical RCS models include the generalized Chi-square model \cite{Shi2012ChiLognormal} and the classical Swerling I model \cite{Swerling1960TIT}, which capture different fluctuation behaviors of target reflectivity.
Fig.~\ref{RCS_distribution} illustrates example polar RCS patterns under different statistical assumptions when the target is centrally located in the ISAC system. The angular RCS patterns vary across the considered models. In particular, the Chi-square model exhibits relatively smooth fluctuations over angle, whereas the Swerling I model shows more pronounced variations \cite{Fang2020SwerlingChi-2}. Further discussion on the impact of statistical RCS modeling on sensing performance is provided in Section~\ref{Impact of Power Constraint Setting and RCS Modeling}.

Since electromagnetic scattering depends jointly on the incident and observation angles, the RCS between the $m$-th transmitting node and the $n$-th receiving node differs from the purely monostatic RCS. 
For small to moderate bistatic angles, and assuming the target surface is electrically large with dominant specular reflections, the bistatic RCS is modeled as \cite{Kell1965MonotoBi}
\begin{equation}
    \beta_{s,n,m}
    =
    \beta_{s}
    \cos\!\left(
    \frac{|\theta_{s,n}-\theta_{s,m}|}{2}
    \right),
\end{equation}
where $\beta_{s}$ denotes the monostatic RCS evaluated at the corresponding aspect angle. 

Accordingly, we characterize the bistatic RCS $\beta_{s,n,m}$ by its mean and variance, defined as 
$\mu_{s,n,m}=\mathbb{E}[\beta_{s,n,m}]$ and 
$\nu^2_{s,n,m}=\mathrm{Var}[\beta_{s,n,m}]$, respectively. 
The angular dependence of the RCS is encapsulated through these mean and variance terms, while other scattering components, e.g., delay-dependent effects, are assumed independent of the AoAs and are therefore not explicitly considered. 

\textbf{Remark 1:} To enable statistical modeling of the unknown instantaneous RCS, we assume that its angle-dependent statistical profile is available a priori. Specifically, the global RCS profile, characterized by the mean $\mu_{s,n,m}$ and variance $\nu^2_{s,n,m}$ of $\beta_{s,n,m}$ for all $n,m \in \mathcal{N}$, is defined in a target-centered coordinate system and obtained offline through electromagnetic simulations or measurement-based databases \cite{Stealth_profile}. Given the AoA information $\theta_{s,n}$ and $\theta_{s,m}$, the effective observation angles at the distributed ISAC nodes are transformed into this target-centered coordinate system, from which the corresponding RCS statistics $\mu_{s,n,m}$ and $\nu_{s,n,m}^2$ are directly retrieved.

\subsubsection{Distributed Sensing Channel and Detection Model}
We first define the nominal sensing channel under perfect phase synchronization. 
Specifically, the effective sensing channel 
$\bar{\mathbf{G}}_{s,n,m} \in \mathbb{C}^{M_r \times M_t}$ 
from the $m$-th transmitting node to the $n$-th receiving node via the target reflection is modeled as \cite{Demirhan2025ISAC_BF, behdad2024multi, Liu2021Sensing_channel}
\begin{equation}
\label{effective_sensing_channel}
\bar{\mathbf{G}}_{s,n,m}
=
L_{n,m}\beta_{s,n,m}\,
\mathbf{A}_{n,m}(\theta_{s,n},\theta_{s,m}),
\end{equation}
where $L_{n,m}$ denotes the large-scale propagation factor defined as
$
L_{n,m}
\triangleq
\sqrt{P_{n,m}}\,
e^{-j\frac{2\pi}{\lambda}
\left(
\lVert \mathbf{p}_n-\mathbf{p}_s \rVert
+
\lVert \mathbf{p}_s - \mathbf{p}_m\rVert
\right)}$,
with $P_{n,m}$ representing the large-scale path-loss coefficient of the bistatic link from node $m$ to the target and from the target to node $n$. The exponential term captures the deterministic phase rotation induced by the round-trip propagation distance. 
The angular response matrix is defined as
$
\mathbf{A}_{n,m}(\theta_{s,n},\theta_{s,m})
\triangleq
\mathbf{a}_r(\theta_{s,n})
\mathbf{a}_t^{T}(\theta_{s,m}),
$
which characterizes the receive and transmit array responses toward the target.

In practice, residual phase synchronization errors exist among distributed ISAC nodes. 
As a result, the effective sensing channel experiences an additional phase rotation and can be written as
\begin{equation}
\begin{aligned}
\mathbf{G}_{s,n,m}
&=
\bar{\mathbf{G}}_{s,n,m}e^{j(\phi_n-\phi_m)}\\
&=\bar{\mathbf{G}}_{s,n,m}
+
\mathbf{E}_{s,n,m},
\end{aligned}
\end{equation}
where the perturbation term is defined as
$
\mathbf{E}_{s,n,m}
=
\left(e^{j(\phi_n-\phi_m)}-1\right)\bar{\mathbf{G}}_{s,n,m}.
$
According to the channel uncertainty bound in \eqref{sync_error_bound}, the worst-case relative phase mismatch between nodes $n$ and $m$ satisfies
\begin{equation}
|\phi_n-\phi_m|\le 2\delta,
\end{equation}
assuming each node phase error is bounded by $\delta$. Using the inequality $|e^{jx}-1|\le |x|$ for any real $x$, the perturbation norm is bounded as
\begin{equation}
\|\mathbf{E}_{s,n,m}\|_F
\le
2\delta \,
\|\bar{\mathbf{G}}_{s,n,m}\|_F .
\end{equation}
For analytical tractability, we adopt a conservative formulation to capture the worst-case degradation of the coherent sensing gain, which leads to the following sensing channel model
\begin{equation}
\mathbf{G}_{s,n,m}
=
(1-2\delta)\bar{\mathbf{G}}_{s,n,m},
\end{equation}
which reflects the loss of coherent sensing gain caused by imperfect phase synchronization across distributed ISAC nodes.

We assume that each ISAC node employs mutually orthogonal sensing waveforms and that waveform information is shared among nodes via the CU.
Owing to waveform orthogonality, the reflected components corresponding to different transmitting nodes can be separated at the receiver after direct-path self-interference cancellation \footnote{We assume that inter-node interference among distributed ISAC nodes is negligible due to coordinated transmission and waveform sharing via the CU, and is thus not explicitly modeled for analytical tractability.}. 
Subsequently, matched filtering is applied to extract the target echo associated with each transmit node \cite{Zhang2021Matched_filter}.

Under the above model, the target detection problem at the CU, which combines all received sensing signals $y_{s,n}$ for all ${n\in \mathcal{N}}$, can be formulated as a binary hypothesis testing.
Depending on the presence or absence of the target, it is given by
\begin{subequations}
\begin{align}
    \mathcal{H}_0 &:~
    \mathbf{y}_{s,n}
    = \sum_{m=1}^{N}\mathbf{c}_{n,m}
    + \mathbf{n}_{s,n}, \label{eq:hypothesis_H0} \\
    \mathcal{H}_1 &:~
    \mathbf{y}_{s,n}
    = \sum_{m=1}^{N}
    \big(
    \mathbf{G}_{s,n,m}
    \mathbf{w}_{s,m}
    s_{s,m}
    + \mathbf{c}_{n,m}
    \big)
    + \mathbf{n}_{s,n}, \label{eq:hypothesis_H1}
\end{align}
\end{subequations}
where $\mathcal{H}_0$ and $\mathcal{H}_1$ correspond to the hypotheses of target absence and presence, respectively. The term $\mathbf{c}_{n,m} \in \mathbb{C}^{M_r \times 1}$ denotes the clutter component, accounting for undesired reflections from surrounding objects. 
By invoking the central limit theorem, the aggregate clutter is modeled as a complex Gaussian random vector
$\mathbf{c}_{n,m} \sim \mathcal{CN}(\mathbf{0}, \mathbf{R}^{\text{clu}}_{n,m})$ \cite{Jeong2025PMN}, where $\mathbf{R}^{\mathrm{clu}}_{n,m}$ denotes the clutter covariance matrix corresponding to the signal transmitted from the $m$-th ISAC node and received at the $n$-th sensing node. The noise vector $\mathbf{n}_{s,n} \in \mathbb{C}^{M_r\times 1}$ represents circularly symmetric complex Gaussian noise following 
$\mathcal{CN}(\mathbf{0}, \sigma_{s,n}^2\mathbf{I}_{M_r})$. 
For analytical tractability, we assume identical noise variance across nodes, i.e., $\sigma_{s,n}^2 = \sigma_s^2$.

\section{KLD Derivation for Centralized Target Detection Framework}
\label{KLD Derivation for Centralized Target Detection Framework}

In the proposed D-ISAC architecture, the CU performs coherent target detection by jointly processing the sensing signals received from all ISAC nodes. The aggregated sensing signal collected at the CU is defined as
\begin{equation}
\mathbf{y}_s =[
\mathbf{y}_{s,1}^T,
\ldots,
\mathbf{y}_{s,N}^T
]^T
\in
\mathbb{C}^{NM_r \times 1}.
\end{equation}
To quantify the target detection performance, we adopt the KLD between the conditional distributions under the two hypotheses as the sensing performance metric. The aggregated received signal can be rewritten as
\begin{equation}
\mathbf{y}_s  =
\begin{cases}
\mathbf{c}+\mathbf{n}, & \text{under } \mathcal{H}_0,\\[4pt]
\mathbf{r}_s+\mathbf{c}+\mathbf{n}, & \text{under } \mathcal{H}_1,
\end{cases}
\end{equation}
where $\mathbf{r}_s$ denotes the aggregated target echo,
\begin{equation}
\mathbf{r}_s
=
\begin{bmatrix}
\sum_{m=1}^{N}\mathbf{G}_{s,1,m}\mathbf{w}_{s,m}s_{s,m} \\
\vdots \\
\sum_{m=1}^{N}\mathbf{G}_{s,N,m}\mathbf{w}_{s,m}s_{s,m}
\end{bmatrix}
\in \mathbb{C}^{NM_r \times 1},
\end{equation}
and the aggregated clutter and noise vectors are defined as
\begin{equation}
\mathbf{c}
=
\big[
\sum_{m}\mathbf{c}_{1,m}^T,
\ldots,
\sum_{m}\mathbf{c}_{N,m}^T
\big]^T,
\end{equation}
\begin{equation}
\mathbf{n}
=
\big[
\mathbf{n}_{s,1}^T,
\ldots,
\mathbf{n}_{s,N}^T
\big]^T.
\end{equation}
Since the effective sensing channel $\mathbf{G}_{s,n,m}$ is modeled as a complex Gaussian random matrix and is statistically independent of the transmitted signals, the aggregated target echo $\mathbf{r}_s$ is also a complex Gaussian random vector. 
Furthermore, because $\mathbf{r}_s$, $\mathbf{c}$, and $\mathbf{n}$ are mutually independent, the aggregated received signal $\mathbf{y}_s$ follows a complex Gaussian distribution under both hypotheses:
\begin{equation}
\label{aggregated_received_signal_gaussian}
\mathbf{y}_s \sim
\begin{cases}
\mathcal{CN}(\mathbf{0}, \mathbf{R}_0), & \text{under } \mathcal{H}_0,\\[4pt]
\mathcal{CN}(\boldsymbol{\mu}_1, \mathbf{R}_1), & \text{under } \mathcal{H}_1.
\end{cases}
\end{equation}
Under $\mathcal{H}_0$, only clutter and noise are present, yielding the covariance matrix
\begin{equation}
\mathbf{R}_0
=
\mathrm{blkdiag}(
\mathbf{R}_{0,1},\ldots,\mathbf{R}_{0,N}),
\end{equation}
where
\begin{equation}
\mathbf{R}_{0,n}
=
\sum_{m=1}^{N}\mathbf{R}^{\mathrm{clu}}_{n,m}
+
\sigma_s^2\mathbf{I}_{M_r}.
\end{equation}
Under $\mathcal{H}_1$, the mean vector is
\begin{equation}
\boldsymbol{\mu}_1
=
\mathbb{E}[\mathbf{r}_s]
=
\mathbf{0},
\end{equation}
due to the zero-mean signaling assumption and the statistical independence between the transmitted signals and the sensing channels. Consequently, the covariance matrix under $\mathcal{H}_1$ becomes
\begin{equation}
\label{covariance_total}
\mathbf{R}_1
=
\mathbf{R}_s
+
\mathbf{R}_0,
\end{equation}
where the target-related covariance matrix is
\begin{equation}
\mathbf{R}_s
=
\mathrm{blkdiag}
(
\mathbf{R}_{s,1},\ldots,\mathbf{R}_{s,N}
),
\end{equation}
with
\begin{equation}
\label{covariance_sensing}
\mathbf{R}_{s,n}
=
\sum_{m=1}^N
\mathbf{G}_{s,n,m}
\mathbf{w}_{s,m}
\mathbf{w}_{s,m}^H
\mathbf{G}_{s,n,m}^H.
\end{equation}
The orthogonal sensing waveforms enable distributed MIMO radar operation, where coherent combining is performed at the CU after waveform separation. 

For Gaussian hypothesis testing, the KLD characterizes the statistical separability between $\mathcal{H}_1$ and $\mathcal{H}_0$, and is directly related to the asymptotic detection error exponent. The KLD between the distributions under $\mathcal{H}_1$ and $\mathcal{H}_0$ is defined as
\begin{equation}
\label{KL_def}
D_{\text{KL}}(\mathcal{H}_1\lVert \mathcal{H}_0)
=
\mathbb{E}_{\mathcal{H}_1}
\!\left[
\log
\frac{p(\mathbf{y}_s|\mathcal{H}_1)}
{p(\mathbf{y}_s|\mathcal{H}_0)}
\right].
\end{equation}
Substituting the complex Gaussian models in \eqref{aggregated_received_signal_gaussian} into \eqref{KL_def}, the KLD between the two hypotheses admits the following closed-form expression:
\begin{equation}
D_{\text{KL}}
=
\log
\frac{|\mathbf{R}_0|}
{|\mathbf{R}_1|}
+
\mathrm{tr}
\left(
\mathbf{R}_0^{-1}\mathbf{R}_1
\right)
-
NM_r.
\end{equation}

\textbf{Remark 2: }
The target-related covariance matrix $\mathbf{R}_s$ is explicitly determined by the sensing beamformers $\{\mathbf{w}_{s,n}\}$, and consequently the overall covariance matrix $\mathbf{R}_1$ inherits this dependence. 
Moreover, the derived $\mathbf{R}_1$ implicitly incorporates the target channel uncertainties, since the sensing channel $\mathbf{G}_{s,n,m}$ captures both the AoA uncertainty and the statistical RCS characteristics. 
As a result, the KLD becomes an uncertainty-aware functional of the sensing beamforming strategy. 
Maximizing the KLD therefore enhances the statistical distinguishability between $\mathcal{H}_1$ and $\mathcal{H}_0$ under practical impairments, leading to improved centralized detection performance.

\section{Proposed Robust D-ISAC Beamforming}
\label{Problem Formulation and Proposed Robust Beamforming Algorithm}

\subsection{Problem Formulation}
\label{Problem Formulation}

Based on the communication and sensing performance metrics established in Sections~\ref{system model} and~\ref{KLD Derivation for Centralized Target Detection Framework}, we aim to design the transmit beamformers to maximize the KLD averaged over the statistical RCS coefficients for the target, while guaranteeing the downlink communication QoS requirements for all UEs.

Assuming centralized coordination at the CU, all distributed ISAC nodes jointly optimize their beamformers under a total transmit power constraint. 
Taking into account the statistical uncertainty of the RCS coefficients, the beamforming algorithm is formulated as the following optimization problem:
\begin{subequations}\label{p1}
\begin{align}
\text{(P1):} \quad
& \max_{\{\mathbf{W}_{c,n}, \mathbf{w}_{s,n}\}}
\;\mathbb{E}_{\boldsymbol{\beta}}\!\left[D_{\text{KL}}\right] \label{p1a} \\[4pt]
\text{s.t.} \quad
& \text{SINR}_k \ge \Gamma_c, \quad \forall k, \label{p1b} \\[4pt]
& \sum_{n=1}^N 
\mathrm{tr}\!\left(
\mathbf{W}_{c,n}\mathbf{W}_{c,n}^H
+
\mathbf{w}_{s,n}\mathbf{w}_{s,n}^H
\right)
\le P^{\max}, \label{p1c} \\[4pt]
& \eqref{syncronization_error_chanel},\; \eqref{aoa_bound}, \label{p1d}
\end{align}
\end{subequations}
where $\boldsymbol{\beta}=\{\beta_{s,n,m}\}_{n,m \in \{1,\ldots,N\}}$ denotes the collection of random RCS coefficients and $P^{\max}$ denotes the total transmit power budget across all distributed ISAC nodes.
\begin{figure*}[!t]
\begin{align}
    \label{signal_min}
\min_{\mathbf{e}_{n,k}}
\bigg|
\sum_{n=1}^N 
\tilde{\mathbf{h}}_{n,k}^H
\mathbf{w}_{c,n,k}
\bigg|^2
=
\mathrm{tr}\!\left(
\mathbf Q_k 
\mathbf{W}_{c}^{(k)}
\right), \tag{30}
\end{align}
\begin{align}
\label{interference_max}
\max_{\mathbf{e}_{n,k}}
&\bigg(
\sum_{j \in \mathcal{K}\backslash \{k\}}
\bigg|
\sum_{n=1}^N 
\tilde{\mathbf{h}}_{n,k}^H
\mathbf{w}_{c,n,j}
\bigg|^2
+
\sum_{n=1}^N
\big|
\tilde{\mathbf{h}}_{n,k}^H
\mathbf{w}_{s,n}
\big|^2
+ \sigma_c^2
\bigg)
\nonumber\\
&=
\sum_{j \in \mathcal{K}\backslash \{k\}}
\bigg(
\mathrm{tr}\!\left(
\mathbf Q_k 
\mathbf W_{c}^{(j)}
\right)
+ N\delta^2\,\mathrm{tr}(\mathbf W_{c}^{(j)})
+ \sqrt{N}\delta 
\|\mathbf{W}_{c}^{(j)}\mathbf{h}_k\|
+ \sqrt{N}\delta 
\|\mathbf{h}_k^H\mathbf{W}_{c}^{(j)}\|
\bigg)
\nonumber\\
&\quad
+
\sum_{n=1}^N
\bigg(
\mathrm{tr}(
\mathbf Q_{n,k}
\mathbf W_{s,n}
)
+ \delta^2\,\mathrm{tr}(\mathbf W_{s,n})
+ \delta
\|\mathbf{W}_{s,n}\mathbf{h}_{n,k}\|
+ \delta
\|\mathbf{h}_{n,k}^H\mathbf{W}_{s,n}\|
\bigg)
+ \sigma_c^2 , \tag{31}
\end{align}
\hrulefill
\vspace{-0.5cm}
\end{figure*}
The objective function in \eqref{p1a} maximizes the expected KLD between the target-present and target-absent hypotheses, averaged over the random RCS realizations. This expectation in \eqref{p1a} accounts for the statistical uncertainty of the target reflectivity. Accordingly, $\mathbb{E}_{\boldsymbol{\beta}}[D_\text{KL}]$ can be approximated via Monte Carlo sampling as $\frac{1}{S}\sum_{s=1}^{S}D_\text{KL}^{(s)}$, where $S$ denotes the number of sampled RCS realizations. The constraint \eqref{p1b} ensures that the downlink SINR of each UE satisfies the required threshold $\Gamma_c$, thereby guaranteeing reliable communication performance during joint sensing and communication operation. 
The total transmit power across all distributed ISAC nodes is limited by \eqref{p1c}, enabling flexible power allocation between communication and sensing functionalities. In (P1), we consider a total system power constraint, while alternative formulations based on per-node or per-antenna power constraints can be readily incorporated into the proposed framework. The constraints \eqref{p1d} account for the uncertainty arising from phase synchronization errors and imperfect knowledge of the target AoA. The problem (P1) is highly non-convex due to log-determinant structure of the objective function
\eqref{p1a}, which involves the logarithm of the covariance matrix $\mathbf{R}_1$, a nonlinear function
of the beamforming variables. Moreover, the SINR constraints \eqref{p1b} are fractional quadratic
constraints with respect to the beamforming vectors, which further contribute to the non-convexity
of the problem. To tackle this non-convexity, we analyze and relax it into a convex problem using the SDR and SCA methods.

\subsection{Proposed Robust D-ISAC Beamforming}
\label{Proposed Robust D-ISAC Beamforming}

To account for the underlying system uncertainties in (P1), we develop a robust beamforming algorithm by sequentially handling phase synchronization errors in the communication links, incorporating angular uncertainty and statistical RCS variations in the sensing metric, and transforming the resulting non-convex problem into a tractable form via relaxation and convex approximation techniques.

\subsubsection{Phase Synchronization Error in Communication}
We first consider the phase synchronization errors in the downlink communication channels. 
Since the CSI perturbation $\mathbf{e}_{n,k}$ lies within the bounded uncertainty set $\mathcal{E}_{n,k}$ in (\ref{syncronization_error_chanel}) and (\ref{sync_error_bound}), the SINR constraint in (P1) must hold for all admissible channel realizations. 
This leads to a worst-case SINR reformulation defined as
\begin{equation}
\begin{aligned}
\label{min_max_SINR}
&\overline{\text{SINR}}_{k}=\\&
\frac{\min\limits_{\mathbf{e}_{n,k}}
\big\lvert{\sum\limits_{n=1}^N 
\mathbf{\tilde{h}}^H_{n,k}
\mathbf{w}_{c,n,k}}\big\rvert^2}
{\max\limits_{\mathbf{e}_{n,k}}
\bigg(
\sum\limits_{j \in \mathcal{K}\backslash \{k\}}
\big\lvert \sum\limits_{n=1}^N 
\mathbf{\tilde{h}}^H_{n,k}
\mathbf{w}_{c,n,j}\big \rvert^2
+
\sum\limits_{n=1}^N
\big\lvert  
\mathbf{\tilde{h}}_{n,k}^H
\mathbf{w}_{s,n} \big\rvert^2
+
\sigma_c^2
\bigg)} .
\end{aligned} \tag{29}
\end{equation}
\setcounter{equation}{31}
By exploiting the quadratic structure of the worst-case SINR in \eqref{min_max_SINR}, 
the desired signal and interference terms admit closed-form expressions, as summarized in the following lemma.

\begin{lemma}
For a given CSI uncertainty set 
$\mathcal{E}_{n,k}=\{\mathbf{e}_{n,k}:\lVert \mathbf{e}_{n,k} \rVert^2\leq \delta^2\}$, 
the worst-case desired signal power and interference power in \eqref{min_max_SINR} are given by 
\eqref{signal_min} and \eqref{interference_max} at the top of this page, respectively. The minimum desired signal power under the CSI uncertainty set is (\ref{signal_min}), where $\mathbf Q_k \triangleq \mathbf h_k \mathbf h_k^H$. Similarly, the maximum interference term over the uncertainty region is given by (\ref{interference_max}), where $\mathbf Q_{n,k}\triangleq \mathbf h_{n,k}\mathbf h_{n,k}^H$. Here, the stacked channel vector and communication beamformer are defined as
$
\tilde{\mathbf{h}}_k
=
[\tilde{\mathbf{h}}_{1,k}^T,\ldots,\tilde{\mathbf{h}}_{N,k}^T]^T
$
and
$
\mathbf{w}_{c}^{(k)}
=
[\mathbf{w}_{c,1,k}^T,\ldots,\mathbf{w}_{c,N,k}^T]^T,
$
respectively. 
The lifted matrices are defined as
$
\mathbf{W}_{c}^{(k)}
=
\mathbf{w}_{c}^{(k)}
(\mathbf{w}_{c}^{(k)})^H
\in \mathbb{C}^{NM_t \times NM_t}
$
and
$
\mathbf{W}_{s,n}
=
\mathbf{w}_{s,n}
\mathbf{w}_{s,n}^H,
$
which are rank-one positive semidefinite matrices.
\end{lemma}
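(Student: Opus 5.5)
The plan is to lift every quadratic form and then bound the perturbation terms one at a time. Stack the per-node errors as $\mathbf{e}_k=[\mathbf{e}_{1,k}^T,\ldots,\mathbf{e}_{N,k}^T]^T$, so that $\tilde{\mathbf{h}}_k=\mathbf{h}_k+\mathbf{e}_k$ with $\mathbf{h}_k=[\mathbf{h}_{1,k}^T,\ldots,\mathbf{h}_{N,k}^T]^T$. Since each block satisfies $\|\mathbf{e}_{n,k}\|^2\le\delta^2$ by \eqref{sync_error_bound}, the stacked error obeys $\|\mathbf{e}_k\|^2\le N\delta^2$. Each coherent term can then be written as a trace:
\begin{equation*}
\Big|\sum_{n}\tilde{\mathbf{h}}_{n,k}^H\mathbf{w}_{c,n,j}\Big|^2=\mathrm{tr}\big((\mathbf{h}_k+\mathbf{e}_k)(\mathbf{h}_k+\mathbf{e}_k)^H\mathbf{W}_c^{(j)}\big).
\end{equation*}
Expanding the outer product gives four pieces: the nominal part $\mathrm{tr}(\mathbf{Q}_k\mathbf{W}_c^{(j)})$, the two cross terms $\mathbf{e}_k^H\mathbf{W}_c^{(j)}\mathbf{h}_k$ and $\mathbf{h}_k^H\mathbf{W}_c^{(j)}\mathbf{e}_k$, and the quadratic term $\mathbf{e}_k^H\mathbf{W}_c^{(j)}\mathbf{e}_k$. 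The sensing-interference terms have the same structure, except that each is confined to a single node. Applied to $\mathbf{W}_{s,n}$ with $\|\mathbf{e}_{n,k}\|\le\delta$, the same expansion accounts for the absence of the factor $N$ in the second sum of \eqref{interference_max}.

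For the maximum in \eqref{interference_max}, I would bound each piece separately and add the bounds. The quadratic term satisfies $\mathbf{e}_k^H\mathbf{W}_c^{(j)}\mathbf{e}_k\le\lambda_{\max}(\mathbf{W}_c^{(j)})\|\mathbf{e}_k\|^2\le N\delta^2\,\mathrm{tr}(\mathbf{W}_c^{(j)})$, using $\mathbf{W}_c^{(j)}\succeq\mathbf{0}$; because $\mathbf{W}_c^{(j)}$ is rank one, this is tight. For the cross terms, Cauchy--Schwarz gives $|\mathbf{e}_k^H\mathbf{W}_c^{(j)}\mathbf{h}_k|\le\sqrt{N}\delta\|\mathbf{W}_c^{(j)}\mathbf{h}_k\|$ and $|\mathbf{h}_k^H\mathbf{W}_c^{(j)}\mathbf{e}_k|\le\sqrt{N}\delta\|\mathbf{h}_k^H\mathbf{W}_c^{(j)}\|$. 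Summing over $j\neq k$, adding the per-node sensing analogues with $\delta$ in place of $\sqrt{N}\delta$, and adding $\sigma_c^2$ reproduces \eqref{interference_max}. To argue that this value is the worst case rather than merely an upper bound, I would take $\mathbf{e}_k$ aligned with the dominant eigenvector, i.e.\ along $\mathbf{w}_c^{(j)}$ with the phase chosen so that the cross terms add constructively. I would flag that this choice is exactly tight only term by term; when summing over $j$, equality is understood in the sense of the tractable worst-case surrogate used for the robust constraint.

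For the desired-signal term \eqref{signal_min}, the same expansion applies with $\mathbf{W}_c^{(k)}$. I expect this to be the main obstacle, because a generic ball-constrained minimum would subtract $2\sqrt{N}\delta\|\mathbf{W}_c^{(k)}\mathbf{h}_k\|$ from the nominal term rather than return $\mathrm{tr}(\mathbf{Q}_k\mathbf{W}_c^{(k)})$ itself. My first attempt would exploit the structure of \eqref{syncronization_error_chanel}: the error is a pure phase rotation, $\tilde{\mathbf{h}}_{n,k}=e^{j\phi_n}\mathbf{h}_{n,k}$. For a common rotation, $|\sum_n e^{j\phi}\mathbf{h}_{n,k}^H\mathbf{w}_{c,n,k}|^2$ equals the nominal power exactly, so the perturbation terms cancel and the value is $\mathrm{tr}(\mathbf{Q}_k\mathbf{W}_c^{(k)})$. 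I would then argue that the residual differential-phase degradation is already absorbed into the conservative interference bound, so the numerator can be kept at its nominal value. If that justification does not hold up, the fallback is to state the Cauchy--Schwarz lower bound $\mathrm{tr}(\mathbf{Q}_k\mathbf{W}_c^{(k)})-2\sqrt{N}\delta\|\mathbf{W}_c^{(k)}\mathbf{h}_k\|$ and note that \eqref{signal_min} corresponds to retaining its leading term.
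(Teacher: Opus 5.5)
\textbf{Interference term.} Your treatment of \eqref{interference_max} follows the paper's argument. You relax the per-node sets to the stacked ball $\|\mathbf e_k\|^2\le N\delta^2$, expand the lifted quadratic form, apply Cauchy--Schwarz to the cross terms and bound the quadratic term. Then, as the appendix does, you treat the sum of per-term bounds as the maximum. Your caveat that equality holds only term by term is accurate. Even that tightness is relative to the relaxed ball. Over the actual product of per-node balls, the exact per-term maximum is $\bigl(|\mathbf h_k^H\mathbf w_c^{(j)}|+\delta\sum_n\|\mathbf w_{c,n,j}\|\bigr)^2$, which is in general smaller because $\sum_n\|\mathbf w_{c,n,j}\|\le\sqrt N\|\mathbf w_c^{(j)}\|$. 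This is harmless, since a robust constraint only needs an upper bound on the denominator.

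\textbf{Desired-signal term: the gap.} The gap is \eqref{signal_min}. You are right to be suspicious, but neither of your routes can prove it, because the equality is false.
\begin{itemize}
\item Over the stated set, $\sum_n\mathbf e_{n,k}^H\mathbf w_{c,n,k}$ sweeps the full disk of radius $\delta\sum_n\|\mathbf w_{c,n,k}\|$. The true minimum is therefore $\bigl(\max\{0,\,|\mathbf h_k^H\mathbf w_c^{(k)}|-\delta\sum_n\|\mathbf w_{c,n,k}\|\}\bigr)^2$. This is strictly below $\mathrm{tr}(\mathbf Q_k\mathbf W_c^{(k)})$ whenever $\delta>0$ and $\mathbf h_k^H\mathbf w_c^{(k)}\neq 0$.
\item The paper's proof only evaluates the expression at $\mathbf e_k=\mathbf 0$. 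That shows $\min\le\mathrm{tr}(\mathbf Q_k\mathbf W_c^{(k)})$, which is the wrong direction for a worst case, so it does not supply the missing idea either.
\item Your phase-rotation argument fails for the same reason. A common phase is harmless, but independent per-node phases are admissible. For example, two equal contributions rotated by $\pm\phi$ lose a factor $\cos^2\phi$.
\item Drop the step claiming the loss is absorbed into the interference bound. First, \eqref{signal_min} is a claim about the numerator alone. Second, with $K=1$ and $\mathbf w_{s,n}=\mathbf 0$, \eqref{interference_max} is just $\sigma_c^2$. Nothing is absorbed, and the robust SINR constraint reduces to the nominal one.
\end{itemize}

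\textbf{What you can prove.} Your fallback is correct: $\min\ge\mathrm{tr}(\mathbf Q_k\mathbf W_c^{(k)})-2\sqrt N\delta\|\mathbf W_c^{(k)}\mathbf h_k\|$. It holds for every $\mathbf W_c^{(k)}\succeq\mathbf 0$, by dropping $\mathbf e_k^H\mathbf W_c^{(k)}\mathbf e_k\ge 0$. It also keeps the SINR constraint convex, since its left-hand side is concave and \eqref{interference_max} is convex. Keeping only the leading term, however, discards a nonpositive correction. So \eqref{signal_min} is an optimistic nominal value, not a worst case. The lemma is provable only with \eqref{interference_max} read as an upper bound and \eqref{signal_min} replaced by such a lower bound.
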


\begin{proof}
The proof follows from quadratic expansion under bounded CSI uncertainty and is provided in Appendix~\ref{proof_lemma1}. \renewcommand{\qedsymbol}{}
\end{proof}

By invoking Lemma~1, the worst-case SINR constraints can be expressed in explicit quadratic forms with respect to the beamforming variables.  To further facilitate convex reformulation, we introduce the lifted variables 
$\mathbf{W}_{c}^{(k)}=\mathbf{w}_{c}^{(k)}(\mathbf{w}_{c}^{(k)})^H$ 
and 
$\mathbf{W}_{s,n}=\mathbf{w}_{s,n}\mathbf{w}_{s,n}^H\in \mathbb{C}^{M_t \times M_t}$, 
which are rank-one positive semidefinite matrices. 
Using these matrix variables together with the closed-form expressions in \eqref{signal_min} and \eqref{interference_max}, problem (P1) can be equivalently reformulated as
\begin{subequations}\label{p2}
\begin{align}
\text{(P2):} \quad
& \max_{\{\mathbf{W}_{c}^{(k)}, \mathbf{W}_{s,n}\}}
\;\mathbb{E}_{\boldsymbol{\beta}}[D_{\text{KL}}] \\
\text{s.t.}\quad
& \overline{\text{SINR}}_k \geq \Gamma_c, \quad \forall k, \\
& \sum_{k=1}^K\mathrm{tr}(\mathbf{W}_{c}^{(k)})
+\sum_{n=1}^N\mathrm{tr}(\mathbf{W}_{s,n})
\le P^{\max}, \\
& \mathbf{W}_{c}^{(k)}\succeq 0,\;
\mathrm{rank}(\mathbf{W}_{c}^{(k)})=1, \\
& \mathbf{W}_{s,n}\succeq 0,\;
\mathrm{rank}(\mathbf{W}_{s,n})=1, \\
&  \eqref{aoa_bound},\; \eqref{signal_min},\; \eqref{interference_max}.
\end{align} 
\end{subequations}

The rank-one constraints preserve equivalence with the original beamforming vectors. 
However, problem (P2) remains non-convex due to the non-linear objective function and the rank constraints.

\subsubsection{KLD Characterization under Statistical RCS Uncertainty}

Before solving (P2), the uncertainty arising from imperfect AoA information and stochastic RCS coefficients must be properly incorporated into the objective function (\ref{p2}a). 
Since the RCS coefficients $\boldsymbol{\beta}$ are random variables, the detection performance is characterized in terms of the average KLD with respect to their distribution.

The expectation of the KLD in (\ref{p2}a) can be expressed as
\begin{equation}
    \begin{aligned}
        \label{average_KLD}
        &\mathbb{E}_{\boldsymbol{\beta}}[D_{\text{KL}}]
        \\&=
        \log |\mathbf{R}_0|
        -
        \mathbb{E}_{\boldsymbol{\beta}}
        \!\left[
        \log |\mathbf{R}_1|
        \right]
        +
        \mathrm{tr}
        \!\left(
        \mathbf{R}_0^{-1}
        \mathbb{E}_{\boldsymbol{\beta}}[\mathbf{R}_1]
        \right)
        - NM_r
        \\
        &\ge
        \log |\mathbf{R}_0|
        -
        \log
        \left|
        \mathbb{E}_{\boldsymbol{\beta}}[\mathbf{R}_1]
        \right|
        +
        \mathrm{tr}
        \!\left(
        \mathbf{R}_0^{-1}
        \mathbb{E}_{\boldsymbol{\beta}}[\mathbf{R}_1]
        \right)
        - NM_r,
    \end{aligned}
\end{equation}
where the inequality follows from Jensen's inequality, since the log-determinant function $\log|\cdot|$ is concave over the positive semidefinite cone. Since we aim to maximize the average KLD, we equivalently maximize the lower bound in (\ref{average_KLD}) for tractability.

Since $\mathbf{R}_0$ is deterministic, the randomness of the RCS coefficients only affects the target-related covariance. Hence,
\begin{equation}
\mathbb{E}_{\boldsymbol{\beta}}[\mathbf{R}_1]
=
\mathbf{R}_0
+
\mathbb{E}_{\boldsymbol{\beta}}[\mathbf{R}_s].
\end{equation}
Recalling \eqref{covariance_total}–\eqref{covariance_sensing}, the target-related covariance has a block-diagonal structure. The $n$-th diagonal block of $\mathbb{E}_{\boldsymbol{\beta}}[\mathbf{R}_s]$ is given by
\begin{equation}
\label{expected_sensing_covariance}
\begin{aligned}
\mathbb{E}_{\boldsymbol{\beta}}[\mathbf{R}_{s,n}]
&=(1-2\delta)^2
\sum_{m=1}^{N}
|L_{n,m}|^2
\left(|\mu_{s,n,m}|^2+\nu_{s,n,m}^2\right)
\\
&\quad
\mathbf{A}_{n,m}
\mathbf{W}_{s,m}
\mathbf{A}_{n,m}^H .
\end{aligned}
\end{equation}
Therefore, the stochastic variations of the RCS coefficients, characterized by their second-order statistics, together with the residual phase synchronization errors, are explicitly reflected in the sensing performance metric through the expected target-related covariance matrix.

\subsubsection{Imperfect Target AoA Information}
As discussed in Section~\ref{Sensing System Model}, the mean and variance of the RCS coefficient in (\ref{expected_sensing_covariance}) depend on the relative angle between each ISAC node and the target. However, the AoA parameters are subject to bounded uncertainty as specified in \eqref{aoa_bound}. 
To enhance robustness against such angular mismatch, we adopt a worst-case design principle with respect to the AoA uncertainty. Specifically, for each ISAC node, the effective AoA is chosen from the boundary of the uncertainty set to obtain the worst-case sensing performance, thereby avoiding overly optimistic beamforming designs that may severely degrade under AoA estimation errors, i.e.,
\begin{equation}
\label{worst_case_aoa}
\tilde{\theta}_{s,n}
\triangleq
\arg\min_{\theta \in \Theta_{s,n}}
\ \mathbb{E}_{\boldsymbol{\beta}}
\!\left[
D_{\mathrm{KL}}(\theta)
\right],
\quad
\forall n \in \mathcal{N}.
\end{equation}
The statistical characterization of the bistatic RCS coefficient $\beta_{s,n,m}$ is then evaluated based on the selected worst-case AoAs $\{\tilde{\theta}_{s,n}\}$, which determines the corresponding mean $\mu_{s,n,m}$ and variance $\nu_{s,n,m}^2$ used in the expected sensing covariance.

\subsubsection{SCA-Based KLD Relaxation}

We next examine the curvature of the objective function in problem (P2).
From \eqref{average_KLD}, the average KLD consists of a constant term $\log|\mathbf{R}_0|$, 
a linear trace term $\mathrm{tr}\big(\mathbf{R}_0^{-1}\mathbb{E}_{\boldsymbol{\beta}}[\mathbf{R}_1]\big)$, 
and a negative log-determinant term 
$-\log\big|\mathbb{E}_{\boldsymbol{\beta}}[\mathbf{R}_1]\big|$. 
Since $\mathbb{E}_{\boldsymbol{\beta}}[\mathbf{R}_1]$ is affine with respect to the beamforming matrices 
$\{\mathbf{W}_{c}^{(k)},\mathbf{W}_{s,n}\}$, 
the trace term is linear in the optimization variables. 
Furthermore, the function $-\log\big|\mathbb{E}_{\boldsymbol{\beta}}[\mathbf{R}_1]\big|$ 
is convex with respect to 
$\mathbb{E}_{\boldsymbol{\beta}}[\mathbf{R}_1]$, 
and consequently convex in the beamforming matrices. As a result, the overall objective function of (P2) is non-concave with respect to the optimization variables.

To address this non-concavity, we introduce an auxiliary variable
\begin{equation}
\mathbf{Z}\triangleq \mathbb{E}_{\boldsymbol{\beta}}[\mathbf{R}_1]
=\mathbf{R}_0+\mathbb{E}_{\boldsymbol{\beta}}[\mathbf{R}_s],
\end{equation}
which remains affine in $\{\mathbf{W}_{c}^{(k)},\mathbf{W}_{s,n}\}$. 
The introduction of $\mathbf{Z}$ separates the log-determinant structure from the beamforming variables, thereby enabling the application of SCA to iteratively construct a concave surrogate objective. At the $v$-th iteration of the SCA procedure, let $\mathbf{Z}^{(v)}$ denote the local operating point. 
Since $-\log|\mathbf{Z}|$ is convex, it admits the following global first-order lower bound:
\begin{equation}
\label{eq:SCA_logdet}
\begin{aligned}
-\log|\mathbf{Z}|
&\ge
-\log|\mathbf{Z}^{(v)}|
-\mathrm{tr}\big((\mathbf{Z}^{(v)})^{-1}(\mathbf{Z}-\mathbf{Z}^{(v)})\big)\\
&=
-\log|\mathbf{Z}^{(v)}|
-\mathrm{tr}\big((\mathbf{Z}^{(v)})^{-1}\mathbf{Z}\big)
+\mathrm{tr}\big(\mathbf{I}_{NM_r}\big),
\end{aligned}
\end{equation}
where the inequality follows from the first-order Taylor expansion of a convex function \cite{jung2025logdet_SCA}.

Since the constant terms in \eqref{eq:SCA_logdet} do not affect the maximization of the objective, 
they can be omitted without loss of optimality at each SCA iteration. 
Substituting the lower bound \eqref{eq:SCA_logdet} into \eqref{average_KLD} yields a concave surrogate objective that is affine in $\mathbf{Z}$. 
Consequently, by removing the rank-one constraints for relaxation, problem (P2) can be approximated at the $v$-th iteration as the following SDP:
\begin{subequations}\label{p3}
\begin{align}
\text{(P3)}:
& \max_{\{\mathbf{W}_{c}^{(k)},\mathbf{W}_{s,n}\},\,\mathbf{Z}\succeq \mathbf{0}}
\ \ \mathrm{tr}(\mathbf{R}_0^{-1}\mathbf{Z})
-\mathrm{tr}\big((\mathbf{Z}^{(v)})^{-1}\mathbf{Z}\big) \\
\text{s.t.}\quad
& \mathbf{Z}=\mathbb{E}_{\boldsymbol{\beta}}[\mathbf{R}_1],\\
& \overline{\text{SINR}}_k \ge \Gamma_c,\ \forall k,\\
& \sum_{k=1}^K\mathrm{tr}(\mathbf{W}_{c}^{(k)})+\sum_{n=1}^N\mathrm{tr}(\mathbf{W}_{s,n}) \le P^{\max},\\
& \mathbf{W}_{c}^{(k)} \succeq {0},\ \forall k,\quad
\mathbf{W}_{s,n} \succeq {0},\ \forall n,\\
& \eqref{signal_min},\; \eqref{interference_max},\;\eqref{worst_case_aoa}.
\end{align}
\end{subequations}
\begin{algorithm}[t]
\caption{Robust Beamforming for Coherent D-ISAC}
\label{algorithm}
\begin{algorithmic}[1]

\State \textbf{Input:} Initial feasible point $\mathbf{Z}^{(0)} \succ \mathbf{0}$, threshold $\epsilon>0$
\State \textbf{Output:} ISAC beamforming vectors $\{\mathbf{W}_{c,n}\}$ and $\{\mathbf{w}_{s,n}\}$

\State Initialize iteration index $v=0$

\While{$\big|\mathbb{E}_{\boldsymbol{\beta}}[D_\text{KL}(\mathbf{Y}^{(v+1)})]-\mathbb{E}_{\boldsymbol{\beta}}[D_\text{KL}(\mathbf{Y}^{(v)})]\big|>\epsilon$}

\State Construct the convex surrogate problem (P3) by linearizing $-\log|\mathbf{Z}|$ at $\mathbf{Z}^{(v)}$

\State Solve (P3) at iteration $v$ and obtain
$\mathbf{Y}=\{\{\mathbf{W}_{c}^{(k)}\},
\{\mathbf{W}_{s,n}\},\mathbf{Z}\}$

\State Update $\mathbf{Y}^{(v+1)} \leftarrow \mathbf{Y}$

\State Update iteration index $v \leftarrow v+1$

\EndWhile

\If{$\mathrm{rank}(\mathbf{W}_{c}^{(k)})>1$ or $\mathrm{rank}(\mathbf{W}_{s,n})>1$}

\State Recover feasible rank-one beamforming vectors
\EndIf
\State Obtain beamforming vectors $\{\mathbf{W}_{c,n}\}$ and $\{\mathbf{w}_{s,n}\}$ from $\{\mathbf{W}_c^{(k)}\}$ and $\{\mathbf{W}_{s,n}\}$ 
\end{algorithmic}
\end{algorithm}

Problem (P3) is convex, since the objective function is linear and the equality constraint $\mathbf{Z}=\mathbb{E}_{\boldsymbol{\beta}}[\mathbf{R}_1]$ preserves convexity, while all remaining constraints define convex feasible regions. Hence, (P3) can be efficiently solved using standard interior-point solvers, such as CVX \cite{cvx}. The overall procedure is summarized in Algorithm~1. 

\textbf{Remark 3: }Let 
$\mathbf{Y}=\{\{\mathbf{W}_{c}^{(k)}\},\{\mathbf{W}_{s,n}\},\mathbf{Z}\}$ 
denote the set of optimization variables in (P3), and let 
$\mathbf{Y}^{(v)}$ denote the solution obtained at iteration $v$. 
Starting from an initial feasible point $\mathbf{Z}^{(0)}$, the SCA method iteratively solves (P3) and updates the variables as $\mathbf{Y}^{(v+1)}=\mathbf{Y}$ based on the obtained solution. The iterations terminate when the relative change of the objective value between two consecutive iterations falls below a predefined threshold $\epsilon>0$. Since the rank-one constraints are relaxed in (P3), the resulting covariance matrices $\{\mathbf{W}_{c}^{(k)},\mathbf{W}_{s,n}\}$ may not be rank-one. In such cases, feasible rank-one beamforming vectors can be recovered using Gaussian randomization
or principal eigenvector extraction \cite{Sankar2024Rank1}. Finally, according to Lemma~1, the original beamforming vectors $\{\mathbf{w}_{c}^{(k)}\}$ and $\{\mathbf{w}_{s,n}\}$ can be obtained via eigenvalue decomposition of $\{\mathbf{W}_{c}^{(k)}\}$ and $\{\mathbf{W}_{s,n}\}$.

\subsection{Convergence Analysis and Computational Complexity}

\subsubsection{Convergence Analysis}

At the $v$-th iteration, problem (P3) maximizes a concave lower bound of the objective in (P2), constructed via the first-order Taylor expansion of the convex function $-\log|\mathbf{Z}|$ at $\mathbf{Z}^{(v)}$. The surrogate function is globally tight and matches both the function value and gradient of the original objective at $\mathbf{Z}^{(v)}$. Since the surrogate objective coincides with the original objective at the current point and globally underestimates it elsewhere, each update satisfies
\begin{equation}
\mathbb{E}_{\boldsymbol{\beta}}[D_{\text{KL}}(\mathbf{Y}^{(v+1)})]
\ge
\mathbb{E}_{\boldsymbol{\beta}}[D_{\text{KL}}(\mathbf{Y}^{(v)})].
\end{equation}
Hence, the objective sequence is monotonically non-decreasing. Moreover, the feasible set is compact due to the total power constraint, and the objective function remains finite over this set. Therefore, the generated sequence converges to a finite value.

At convergence, the surrogate objective used in (P3) becomes locally tight to the original objective of (P2), since they share the same first-order behavior at the expansion point. Consequently, the obtained solution satisfies the first-order optimality condition of the SDR-relaxed problem (P2), and the proposed SCA procedure converges to a stationary point of the relaxed problem.
\subsubsection{Computational Complexity}

The dominant computational complexity of the proposed method arises from solving the SDP in (P3) at each SCA iteration. The optimization variables include the lifted communication covariance matrices $\{\mathbf{W}_{c}^{(k)}\}$ of dimension $NM_t \times NM_t$, the sensing covariance matrices $\{\mathbf{W}_{s,n}\}$ of dimension $M_t \times M_t$, and the auxiliary covariance matrix $\mathbf{Z}$ of dimension $NM_r \times NM_r$. 

Using interior-point methods, the computational complexity of solving (P3) scales polynomially with the number of optimization variables and constraints. Let $V$ and $C$ denote the numbers of scalar variables and constraints of the SDP, respectively. Then, the per-iteration complexity is given by \cite{Jiang2020Complexity}
\begin{equation}
C_t\sim\mathcal{O}\!\left(\sqrt{V}\left(CV^2 + C^{2.7} + V^{2.7}\right)\log(1/\varepsilon)\right),
\end{equation}
where $\varepsilon$ denotes the solution accuracy. In the considered formulation, the number of variables scales as $V=K(NM_t)^2 + N M_t^2 + (NM_r)^2$ and the number of constraints as $C=(NM_r)^2+K+N$, and therefore the computational complexity increases polynomially with the antenna and node dimensions.

\section{Numerical Simulation Results}
\label{Numerical Simulation Results}

\begin{table}[t]
    \vspace{-2pt}
    \caption{Simulation parameter setting}
    \label{parameter}
    \centering
    \vspace{-4pt}
    \begin{tabular}{cccc} 
        \\[-1.8ex]\hline 
        \hline \\[-1.8ex] 
        \multicolumn{1}{c}{Parameter} & \multicolumn{1}{c}{Value} & \multicolumn{1}{c}{Parameter} & \multicolumn{1}{c}{Value}\\
        \hline \\[-1.8ex] 
        {$M$} & $12$   &{$P_{\max}/\sigma_c^2$} &  $30\,$dB \\{$P_{\max}/\sigma_s^2$} &  $30\,$dB    &{$R$} &  $100\,$m\\
        {$\lambda$} & $0.03$\,m   &{$\gamma$} &  $5\,$dB\\
        {$\Delta\theta_{s,n}$} & $2$\,deg   &{$K$} &  $3$ \cite{Babu2024Uncertainty_CSI}\\
        {$T$} & $100$  &{$\epsilon$} &  $0.01$\\
        {$\delta$} & $0.01$ \cite{Babu2024Uncertainty_CSI}  \\
        \\[-1.8ex]\hline  
        \hline \\[-1.8ex] 
    \end{tabular}
\end{table} 

In this section, numerical results are presented to evaluate the performance of the proposed robust beamforming algorithm in the considered coherent D-ISAC system. 
We first verify the convergence behavior of the proposed algorithm. 
Subsequently, we investigate the communication–sensing trade-off achieved through joint beamforming optimization across spatially distributed ISAC nodes. Furthermore, we examine the impact of power-constraint configurations and angle-dependent RCS modeling on the sensing performance. In particular, different statistical RCS distributions are considered to capture fluctuation behaviors arising from aspect-angle variations. Unless otherwise specified, the RCS coefficients follow a Chi-square model distribution, which is adopted as the default modeling assumption throughout the simulations.

Unless explicitly specified, the system parameters are set according to Table~\ref{parameter}. The node deployment, channel model, and signal wavelength configuration follow the Third Generation Partnership Project (3GPP)-compliant scenario \cite{ITU20223GPP}, ensuring a realistic propagation environment.

As a benchmark scheme, we consider a zero-forcing (ZF)-based beamforming strategy. 
The total transmit power is divided using a power allocation ratio $\rho \in [0,1]$, where $\rho P^{\max}$ is assigned to ZF communication beams toward the UEs and the remaining $(1-\rho)P^{\max}$ is allocated to ZF sensing beams toward the target direction. 
The ratio $\rho$ is determined via a one-dimensional grid search over the feasible set satisfying the SINR constraints, and the value that maximizes the resulting KLD is selected.

\subsection{Convergence of the Proposed Robust Beamforming Algorithm}
\begin{figure}[t]
    \centering
    \includegraphics[width=1\columnwidth]{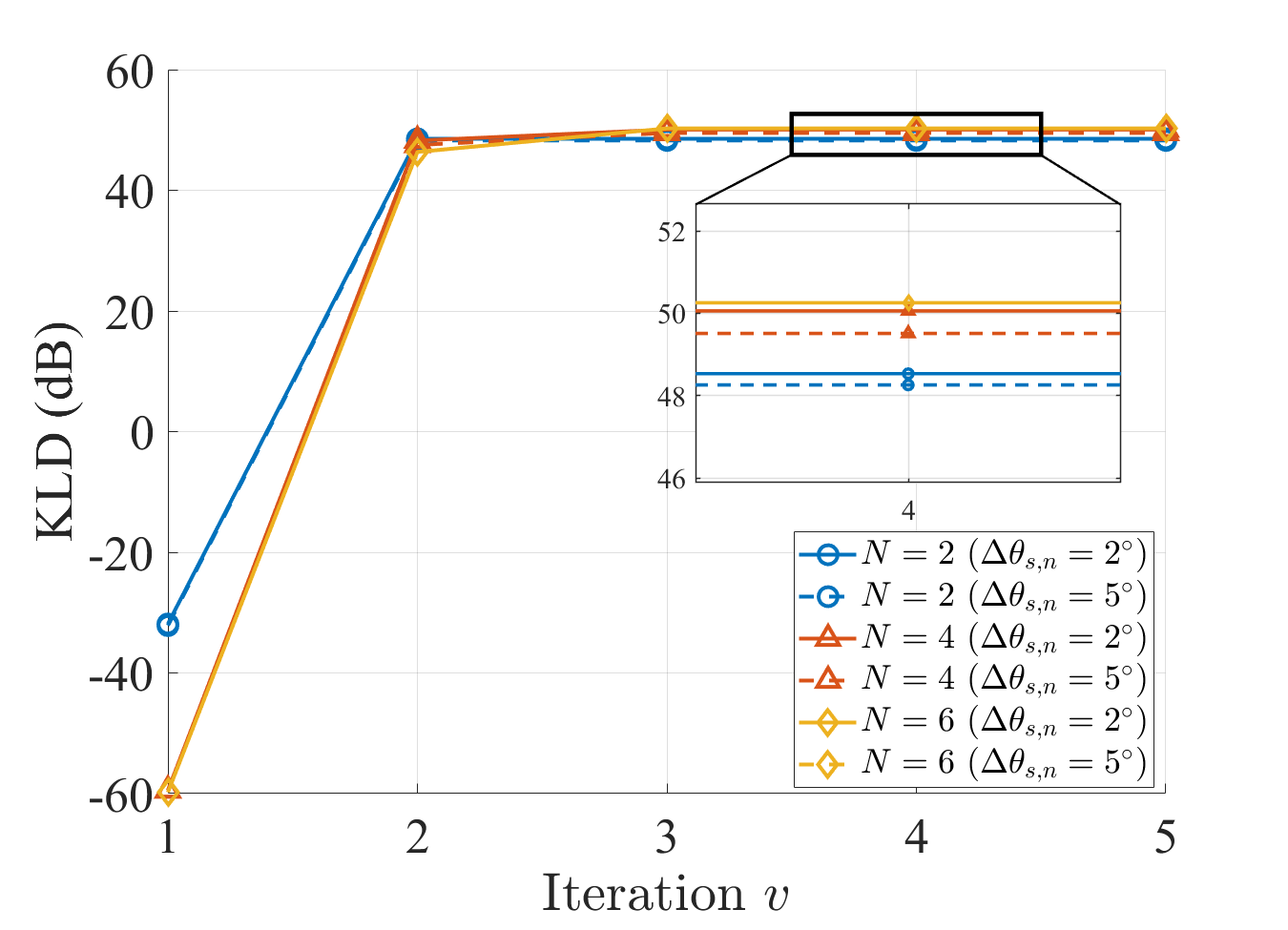}
    \caption{Convergence of the proposed robust beamforming algorithm with $K=3$, $\Gamma_c=20$ dB. }
    \label{Result1_convergence}
\end{figure}

We first investigate the convergence behavior of the proposed robust beamforming algorithm. 
At each SCA iteration $v$, the SDP problem (P3) is solved to update the sensing and communication beamformers, and the corresponding KLD value is evaluated. To examine the impact of system scale and angular uncertainty, we vary the number of distributed ISAC nodes $N \in \{2,4,6\}$ and the AoA uncertainty bound $\Delta\theta_{s,n} \in \{2^\circ, 5^\circ\}$. Fig. \ref{Result1_convergence} shows that the proposed algorithm converges rapidly within a few iterations, typically after three iterations, regardless of the number of nodes or the AoA uncertainty level. Moreover, increasing the number of distributed nodes improves the achievable KLD due to enhanced spatial diversity and coherent combining gains, as depicted in Fig. \ref{Result1_convergence}. In contrast, a smaller AoA uncertainty bound $\Delta\theta_{s,n}$ results in a higher KLD, as reduced angular mismatch improves beam alignment toward the target.

\subsection{Sensing and Communication Performance Trade-off with Robust D-ISAC Beamforming}
\begin{figure}[t]
    \centering
    \includegraphics[width=1\columnwidth]{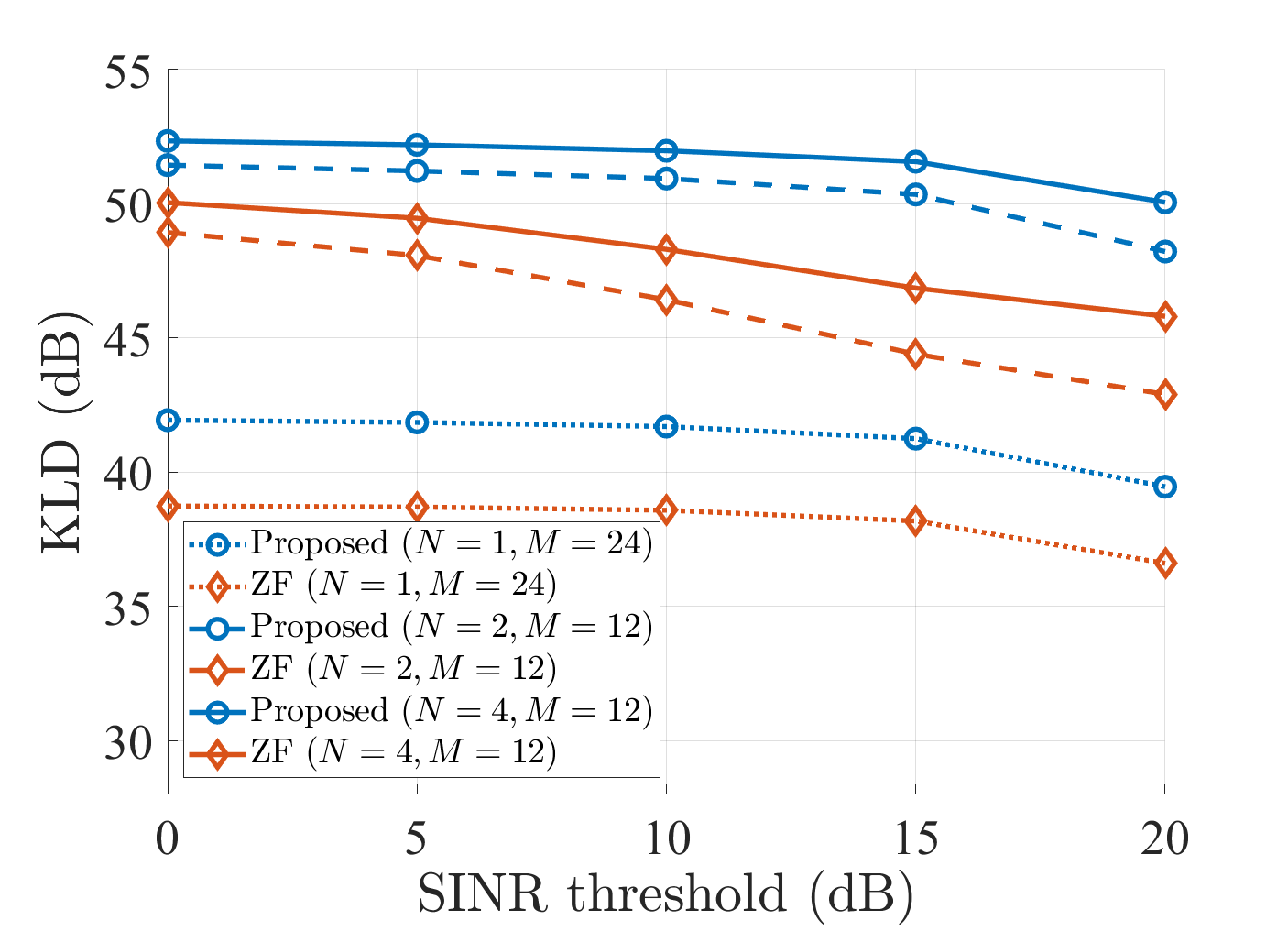}
    \caption{KLD performance of the proposed robust beamforming versus SINR threshold $\Gamma_c$ in a D-ISAC system with $K=3$, and $\Delta\theta_{s,n}=2^{\circ}$.}
    \label{Result2_nodenumber_SINRvsKLD}
\end{figure}

We next examine the SINR–KLD trade-off, as illustrated in Fig.~\ref{Result2_nodenumber_SINRvsKLD}. Specifically, we consider three deployment configurations: a single-node setup with $(N=1,M=24)$, a distributed deployment with $(N=2,M=12)$, and an increased node configuration $(N=4,M=12)$. As shown in Fig. \ref{Result2_nodenumber_SINRvsKLD}, the results clearly demonstrate the inherent trade-off between the communication performance and the sensing performance for all considered schemes. Compared to the ZF-based strategy that directly allocates beams toward either UEs or the target, the proposed joint communication–sensing beamforming design consistently achieves superior sensing performance across the SINR range. Moreover, distributed deployments outperform the single-node configuration even under the same total antenna budget. For example, the $(N=2,M=12)$ setup achieves approximately 24\% higher KLD than the single-node case, owing to spatial diversity and distributed coherent combining gains. As the number of nodes increases, additional sensing gains are observed, confirming the advantage of robust D-ISAC architecture.

\begin{figure}[t]
    \centering
    \includegraphics[width=.95\columnwidth]{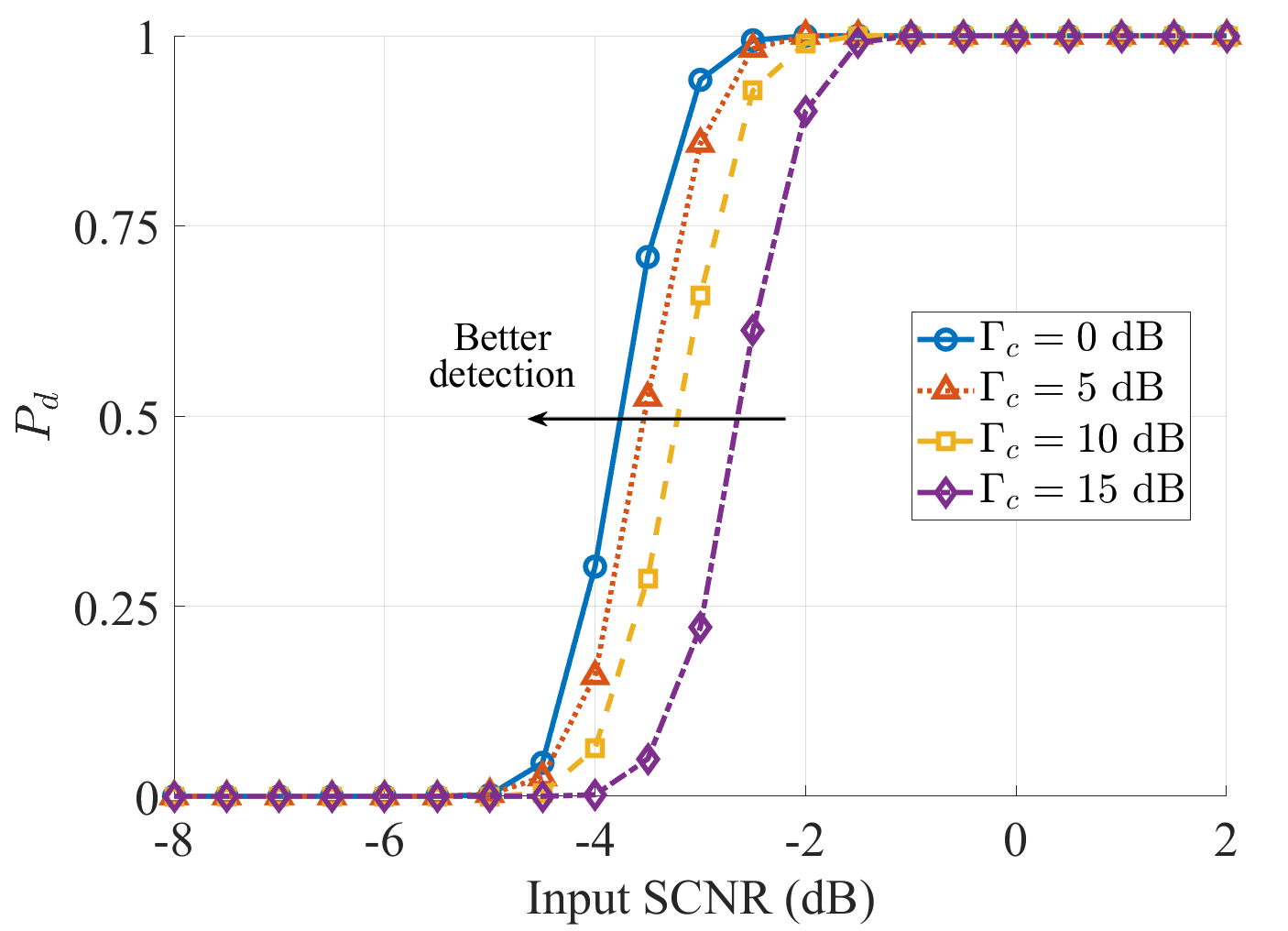}
    \caption{Detection probability versus input SCNR under different SINR thresholds $\Gamma_c$, with $K=3$, $N=2$, $M=12$, and $\Delta\theta_{s,n}=2^{\circ}$.}
    \label{Result3_SINR_SCNRvsPd}
\end{figure}

 Furthermore, Fig.~\ref{Result3_SINR_SCNRvsPd} illustrates the detection probability performance of the proposed beamforming methods with $2^\circ$ degree of AoA target uncertainty. The detection probability is evaluated via Monte Carlo simulations with 1000 independent realizations of transmitted symbols and noise samples. For a given input signal-to-clutter-plus-noise ratio (SCNR), detection is performed based on the output SCNR obtained after whitened matching filtering (WMF) processing, where a fixed decision threshold of 10~dB is applied \cite{Liu2018whiten_MF, chen2025sensingKLD}. The results show that the detection probability sharply increases once the input SCNR exceeds a certain operating region, demonstrating the effectiveness of the optimized sensing beamformers. Specifically, smaller SINR thresholds achieve higher detection probability for the same input SCNR. This behavior is expected, since relaxing the communication QoS constraint allows more spatial degrees of freedom and power to be allocated toward sensing, thereby enhancing the effective sensing gain.

\begin{figure}[t]
    \centering
    \includegraphics[width=.95\columnwidth]{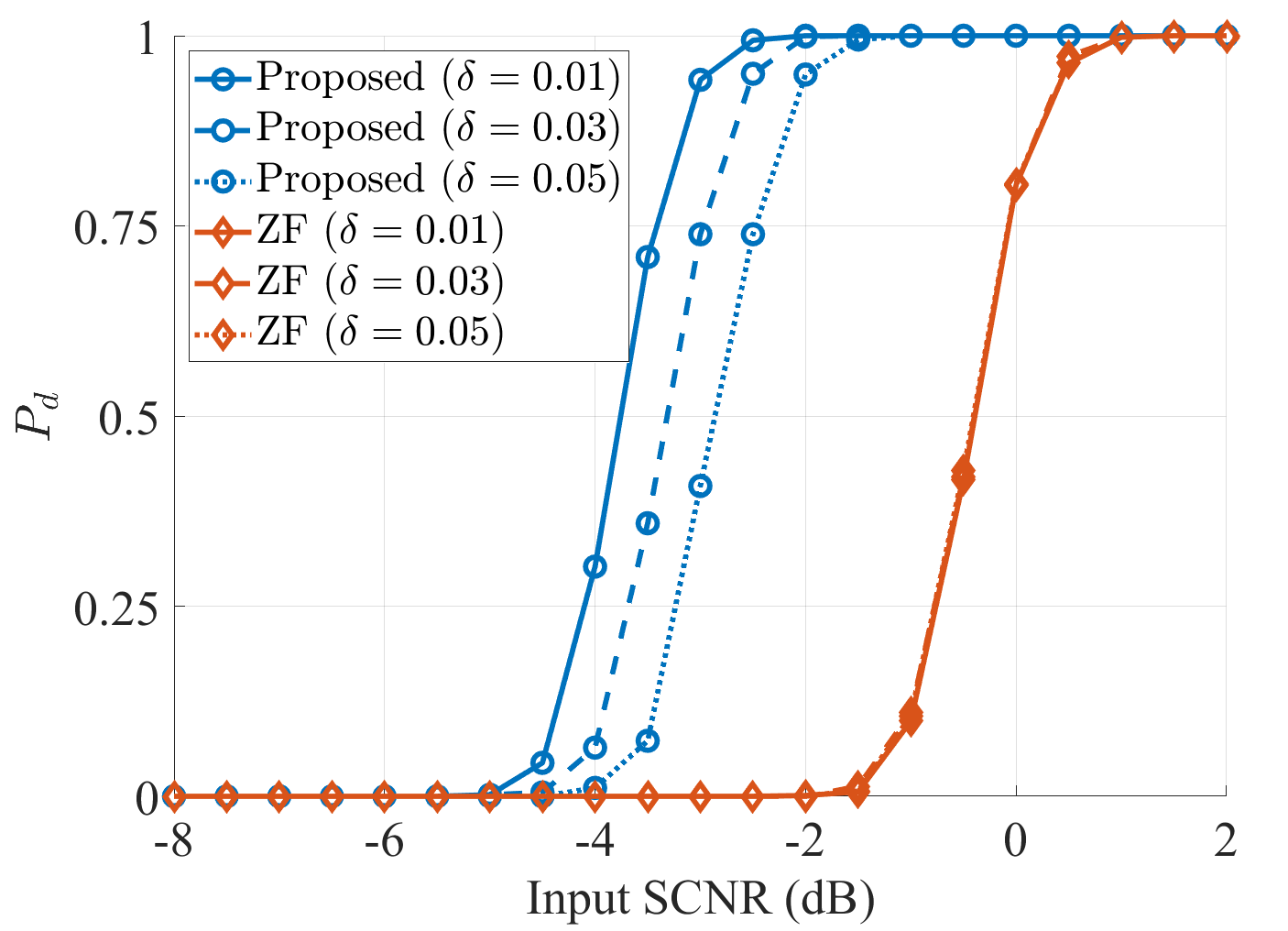}
    \caption{Detection probability versus input SCNR under different uncertainty bound on the residual phase synchronization errors $\delta$, with $K=3$, $N=2$, $M=12$, $\Gamma_c=0$ dB, and $\Delta\theta_{s,n}=2^\circ$.}
    \label{Result6_delta_SCNRvsPd}
\end{figure}

Fig.~\ref{Result6_delta_SCNRvsPd} illustrates the detection probability under different uncertainty bounds on the residual phase synchronization errors $\delta$. As shown in Fig.~\ref{Result6_delta_SCNRvsPd}, the proposed robust ISAC beamforming consistently achieves an approximately 3 dB SCNR gain over the ZF-based benchmark, demonstrating its robustness against synchronization impairments. 
Moreover, smaller values of $\delta$ result in improved detection performance, since reduced phase mismatch among distributed nodes enables more effective coherent combining of the sensing signals.

\subsection{Impact of Power Constraints and Target RCS Modeling}
\label{Impact of Power Constraint Setting and RCS Modeling}

\begin{figure}[t]
    \centering
    \includegraphics[width=.95\columnwidth]{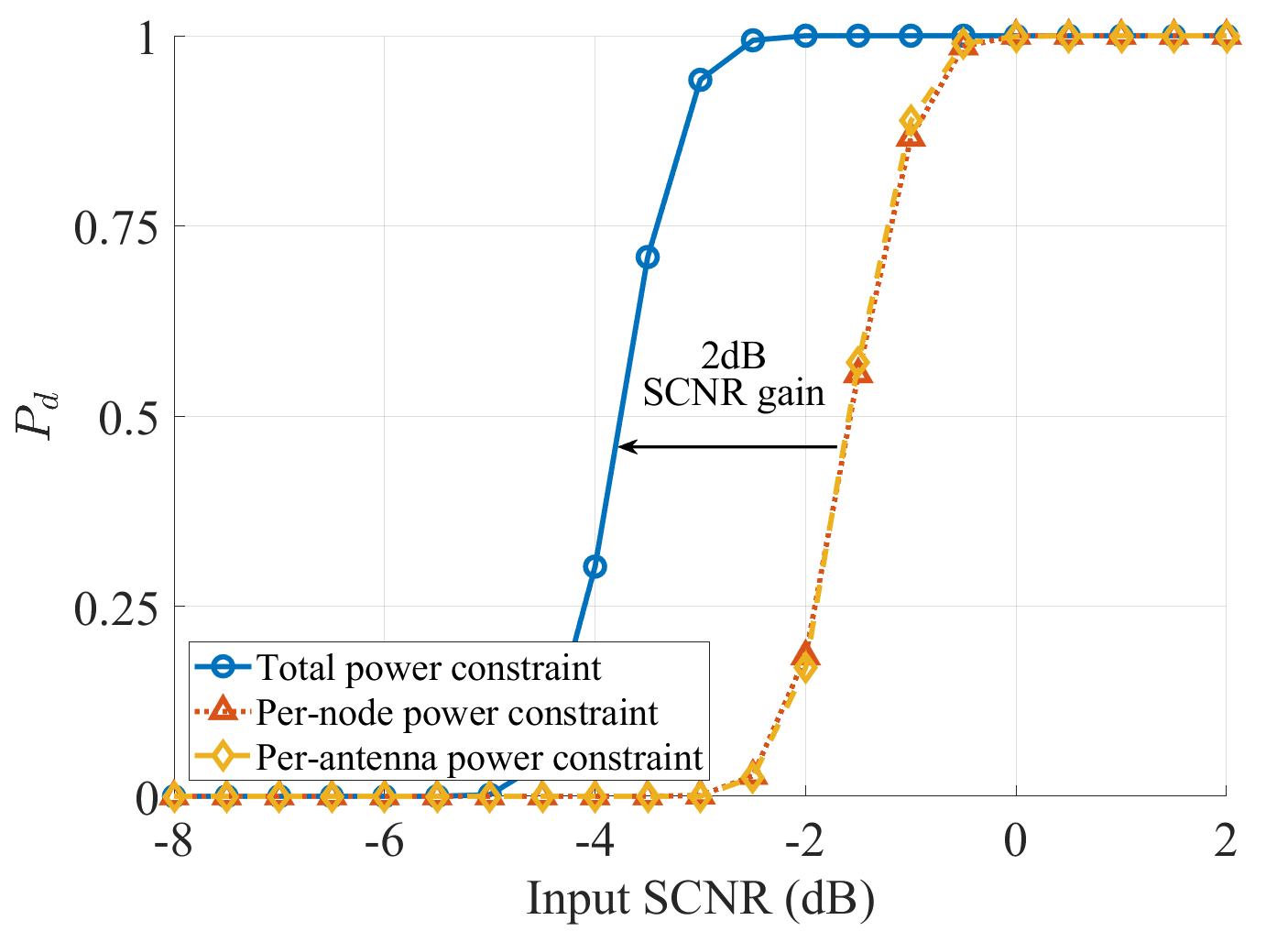}
    \caption{Detection probability versus input SCNR under different power-constraint settings, with $K=3$, $N=2$, $M=12$, $\Gamma_c=0$~dB, and $\Delta\theta_{s,n}=2^{\circ}$.}
    \label{Result4_power_SCNRvsPd.png}
\end{figure}
Fig.~\ref{Result4_power_SCNRvsPd.png} illustrates the detection probability under different power-constraint settings. 
To investigate how the available transmit power budget affects sensing performance, we compare three power allocation strategies: total system power constraint, per-node power constraint, and per-antenna power constraint. As observed in Fig.~\ref{Result4_power_SCNRvsPd.png}, the per-node and per-antenna power constraints result in almost identical detection probability curves, indicating that limiting power at the node level or antenna level leads to similar sensing behavior. In contrast, the total system power constraint achieves noticeably superior detection performance, corresponding to approximately a 2~dB SCNR gain. This improvement is attributed to the increased flexibility in global power allocation, which allows more efficient coherent combining and sensing-oriented beam shaping across distributed nodes.

\begin{figure}[t]
    \centering
    \includegraphics[width=.95\columnwidth]{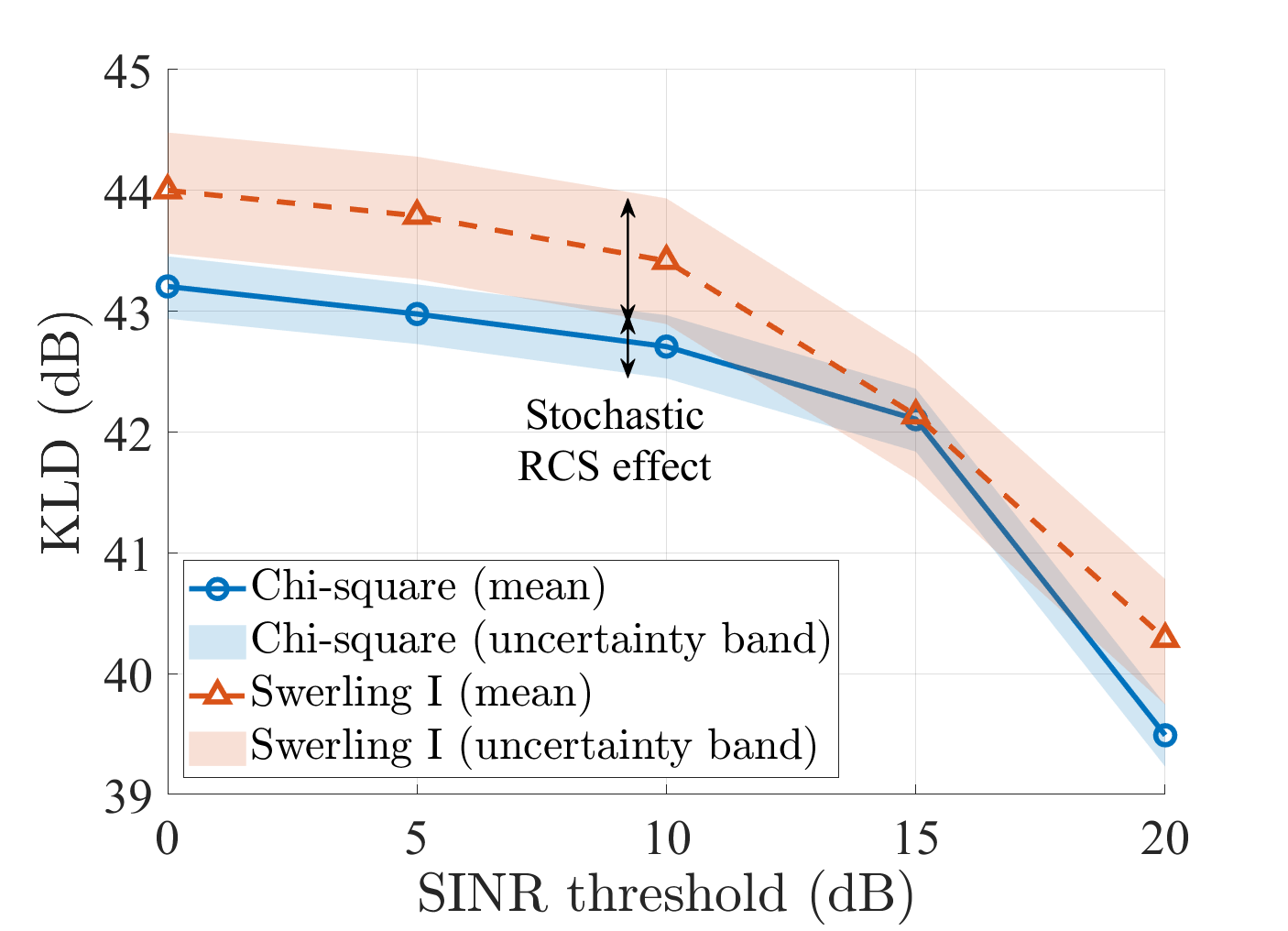}
    \caption{KLD performance versus SINR threshold $\Gamma_c$ under different RCS statistical models, with $K=3$, $N=2$, $M=12$, and $\Delta\theta_{s,n}=2^{\circ}$.}
    \label{Result5_RCS_SINRvsKLD}
\end{figure}
We next examine the impact of statistical RCS modeling on the sensing–communication trade-off. 
As described in Section~\ref{Sensing System Model}, the RCS coefficient $\boldsymbol{\beta}$ is modeled as a random variable whose distribution characterizes fluctuation behaviors associated with aspect-angle variations. In the simulations, we consider two representative statistical models, namely the Chi-square and the Swerling I model, as illustrated in Fig.~\ref{RCS_distribution}. Fig.~\ref{Result5_RCS_SINRvsKLD} compares the resulting sensing–communication trade-off under different RCS models. For each model, after obtaining the optimized beamformers, we generate $S=1000$ independent RCS realizations and evaluate the corresponding KLD values. The plotted curves represent the sample-average KLD, while the shaded regions indicate the empirical variation across RCS realizations. In Fig.~\ref{Result5_RCS_SINRvsKLD}, the Swerling I model consistently exhibits a wider uncertainty band than the Chi-square model across all SINR regions, reflecting the stronger statistical fluctuation of the RCS. This behavior is consistent with the underlying RCS distributions, where the Chi-square model has a larger effective shape parameter that reduces the variance of the RCS, while the Swerling I model corresponds to a stronger fluctuation case \cite{Fang2020SwerlingChi-2}. Nevertheless, the proposed method maintains a consistent sensing–communication trade-off behavior across different RCS statistical models, indicating its robustness to the underlying statistical characterization of the target reflectivity.
\section{Conclusion}
\label{Conclusion}

In this paper, we investigated a robust beamforming design for coherent D-ISAC systems under practical sensing uncertainties, including statistical RCS fluctuations under imperfect target AoA information, and residual phase synchronization errors among distributed nodes. The sensing performance was characterized by the KLD between the target-present and target-absent hypotheses, and the beamforming design was formulated to maximize the expected KLD while guaranteeing the communication QoS of multiple UEs in terms of SINR constraints. The resulting nonconvex optimization problem was efficiently addressed using SDR and SCA. Numerical results demonstrated the inherent communication–sensing trade-off in D-ISAC systems and confirmed that the proposed robust beamforming consistently outperforms the conventional beamforming schemes in sensing performance while maintaining reliable communication QoS. Moreover, distributed deployments provide additional sensing gains through spatial diversity and coherent combining. The results also reveal that phase synchronization uncertainty degrades detection performance, while the sensing performance further depends on the transmit power allocation and the statistical modeling of the target RCS. Future work will extend the proposed framework to more general scenarios, including multi-target detection and OFDM-based D-ISAC systems.

\appendix
\subsection{Proof of Lemma 1}
\label{proof_lemma1}
   For the numerator,

\begin{equation}
    \begin{aligned}
\min_{\mathbf{e}_{n,k}}\bigg\lvert \sum_{n=1}^N \tilde{\mathbf{h}}_{n,k}^H\mathbf{w}_{c,n,k} \bigg\rvert^2 &=\min_{\mathbf{e}_{n,k}}\big\lvert \tilde{\mathbf{h}}_k^H\mathbf{w}_{c}^{(k)}\big\rvert^2\\&=\min_{\mathbf{e}_{k}}
\mathrm{tr}\!\left(
\tilde{\mathbf{h}}_k\tilde{\mathbf{h}}_k^H
\mathbf{W}_{c}^{(k)}
\right),
\end{aligned}
\end{equation}
where $\mathbf{w}_{c}^{(k)}\triangleq [\mathbf{w}_{c,1,k}^T,\ldots,\mathbf{w}_{c,N,k}^T]^T$, $\mathbf{W}_{c}^{(k)}\triangleq \mathbf{w}_{c}^{(k)}(\mathbf{w}_{c}^{(k)})^H$ and $\mathbf{e}_k \triangleq [\mathbf{e}_{1,k}^T,\ldots,\mathbf{e}_{N,k}^T]^T$. Using $\tilde{\mathbf h}_k=\mathbf{h}_k+\mathbf {e}_k$, we have 
\begin{equation}
\label{general_hW_equation}
\begin{aligned}
\mathrm{tr}\!\left(
\tilde{\mathbf{h}}_k\tilde{\mathbf{h}}_k^H
\mathbf{W}_{c}^{(k)}
\right)=&
\mathrm{tr}\!\left(\mathbf{h}_k\mathbf{h}_k^H\mathbf{W}_{c}^{(k)}\right)
+\mathrm{tr}\!\left(\mathbf{h}_k\mathbf{e}_k^H\mathbf{W}_{c}^{(k)}\right)\\&+\mathrm{tr}\!\left(\mathbf{e}_k\mathbf{h}_k^H\mathbf{W}_{c}^{(k)}\right)+\mathrm{tr}\!\left(\mathbf{e}_k\mathbf{e}_k^H\mathbf{W}_{c}^{(k)}\right). 
\end{aligned}
\end{equation}
The numerator is lower bounded by setting $\mathbf e_k=\mathbf 0$, yielding
$\min\limits_{\mathbf{e}_{n,k}}
\big\lvert \sum_{n=1}^N \tilde{\mathbf{h}}_{n,k}^H\mathbf{w}_{c,n,k} \big\rvert^2
=\mathrm{tr}(\mathbf {h}_k\mathbf{h}_k^H \mathbf W_{c}^{(k)})=
\mathrm{tr}(\mathbf Q_k \mathbf W_{c}^{(k)})$.

For the multi-UE interference term in the denominator, we first note that $
\big\lvert \sum_{n=1}^N \tilde{\mathbf{h}}_{n,k}^H\mathbf{w}_{c,n,j} \big\rvert^2
=
\mathrm{tr}\!\left(
\tilde{\mathbf h}_k\tilde{\mathbf h}_k^H
\mathbf W_{c}^{(j)}
\right)$. After expanding the communication interference term as in (\ref{general_hW_equation}), the Cauchy--Schwarz inequality is applied, and the resulting upper bound
is treated as the worst-case (maximum) value over the CSI uncertainty set \cite{Babu2024Uncertainty_CSI}, yielding
\begin{equation}
\begin{aligned}
&\max_{\mathbf \lVert \mathbf{e}_k\rVert ^2\leq N\delta^2 }
\mathrm{tr}\!\left(
\tilde{\mathbf h}_k\tilde{\mathbf h}_k^H
\mathbf W_{c}^{(j)}
\right)
=
\mathrm{tr}\!\left(\mathbf Q_k \mathbf W_{c}^{(j)}\right)
+ N\delta^2\,\mathrm{tr}(\mathbf W_{c}^{(j)})
\\&+ \sqrt{N}\delta \big\lVert\mathbf{W}_{c}^{(j)}\mathbf{h}_k\big\rVert
+ \sqrt{N}\delta \big\lVert\mathbf{h}_k^H\mathbf{W}_{c}^{(j)}\big\rVert.
\end{aligned}
\end{equation}

Similarly, the radar interference can be bounded using the same approach. To this end, we can obtain
\begin{equation}
\begin{aligned}
&\max_{\lVert\mathbf{e}_{n,k}\rVert ^2 \leq \delta^2}
\sum_{n=1}^N
\big\lvert  \tilde{\mathbf{h}}_{n,k}^H\mathbf{w}_{s,n} \big\rvert^2
\\&=
\sum_{n=1}^N
\max_{\lVert\mathbf{e}_{n,k}\rVert ^2 \leq \delta^2}
\mathrm{tr}\!\left(
\tilde{\mathbf h}_{n,k}\tilde{\mathbf h}_{n,k}^H
\mathbf W_{s,n}
\right) \\
&=
\sum_{n=1}^N
\Big(
\mathrm{tr}(\mathbf Q_{n,k}\mathbf W_{s,n})
+ \delta^2\,\mathrm{tr}(\mathbf W_{s,n}) 
+ \\&\delta\lVert\mathbf{W}_{s,n}\mathbf{h}_{n,k}\rVert
+ \delta\lVert \mathbf{h}_{n,k}^H\mathbf{W}_{s,n}\rVert
\Big).
\end{aligned}
\end{equation}
\bibliography{ref}
\bibliographystyle{IEEEtran}

\end{document}